\documentclass[11pt, a4paper,reqno]{amsart}
\usepackage{amsfonts}
\usepackage{amssymb}
\usepackage[dvips]{graphics}
\usepackage{epsfig}
\pagestyle{myheadings}
\usepackage{euscript}
\usepackage{color}

\usepackage{fullpage,hyperref}

\usepackage{fancyvrb}

 \newtheorem{thm}{Theorem}[section]

 \newtheorem{prop}[thm]{Proposition}
 \theoremstyle{definition}
 
 \newtheorem{rem}[thm]{Remark}

 \numberwithin{equation}{section}
 
 \def\idtyty{{\mathchoice {\mathrm{1\mskip-4mu l}} {\mathrm{1\mskip-4mu l}} %
{\mathrm{1\mskip-4.5mu l}} {\mathrm{1\mskip-5mu l}}}}

\newcommand{\caA}{{\mathcal A}}
\newcommand{\caB}{{\mathcal B}}

\newcommand{\caE}{{\mathcal E}}
\newcommand{\caF}{{\mathcal F}}

\newcommand{\caH}{{\mathcal H}}

\newcommand{\caJ}{{\mathcal J}}

\newcommand{\caL}{{\mathcal L}}

\newcommand{\caN}{{\mathcal N}}

\newcommand{\caT}{{\mathcal T}}
\newcommand{\caU}{{\mathcal U}}
\newcommand{\caV}{{\mathcal V}}

\newcommand{\bbN}{{\mathbb N}}

\newcommand{\bbR}{{\mathbb R}}

\newcommand{\bbT}{{\mathbb T}}

\newcommand{\bbZ}{{\mathbb Z}}

\newcommand{\iu}{\mathrm{i}}

\newcommand{\str}{^{*}}
\newcommand{\ep}[1]{\mathrm{e}^{#1}}

\newcommand{\ket}[1]{\vert#1\rangle}

\DeclareMathOperator*{\wlim}{w-lim}

%Sids%%%%%%%%%%%%%%%%%%%%%%%%%%%%%%%%%%%%%%%%%%%%%%%%%%%%%%%%%%%%

\newcommand{\cstar}{\mathfrak{A}}

\newcommand{\eigstr}[2]{\mathcal{F}^{#1}_{#2}}
\newcommand{\dstr}[2]{\mathcal{F}^{#1}_{\overline{#2}}}

\newcommand{\plus}[1]{\mathcal{A}^{#1}}
\newcommand{\face}{\mathcal{B}}

%%%%%%%%%%%%%%%%%%%%%%%%%%%%%%%%%%%%%%%%%%%%%

%%%%%%%%%%%%%%%%%%%%%%%%%%%%%%%%%%%%%%%%%%%%%
\begin{document}
\title{Dynamical Abelian anyons with bound states and scattering states}

\author[1]{Sven Bachmann}
\author[2]{Bruno Nachtergaele}
\author[3]{Siddharth Vadnerkar}

\address[1]{Department of Mathematics, University of British Columbia, Vancouver, BC V6T 1Z2, Canada}
\address[2] {Department of Mathematics and Center for Quantum Mathematics and Physics, University of California, Davis, Davis, CA, 95616, USA}
\address[3]{Department of Physics, University of California, Davis, Davis, CA, 95616, USA}
\date{\today}                 

\maketitle
 
\begin{abstract} 
We introduce a family of quantum spin Hamiltonians on $\bbZ^2$ that can be regarded as perturbations of Kitaev's abelian quantum double models that preserve the gauge and duality symmetries of these models. We analyze in detail the sector with one electric charge and one magnetic flux and show that the spectrum in this sector consists of both bound states and scattering states of abelian anyons. Concretely, we have defined a family of lattice models in which abelian anyons arise naturally as finite-size quasi-particles with non-trivial dynamics that consist of a charge-flux pair. In particular, the anyons exhibit a non-trivial holonomy with a quantized phase, consistent with the gauge and duality symmetries of the Hamiltonian.
\end{abstract}

%%%%%%%%%%%%%%%%%%%%%%%%%%%%%%%%%%%%%%%%%%%%%%%%%
%%%%%%%%%%%%%%%%%%%%%%%%%%%%%%%%%%%%%%%%%%%%%%%%%

\section{Introduction}   \label{sec: introduction}

When Leinaas and Myrheim published their remarkable paper \cite{leinaas:1977}, the possibility of a new type of quasi-particles in two space dimensions with more general statistics than the well-established fermions and bosons, was a purely theoretical observation. This changed abruptly with the discovery of the fractional quantum Hall effect (FQHE) by Tsui, Stormer, and Gossard \cite{tsui:1982} and Laughlin's many-body wave function \cite{laughlin:1983} as a proposed explanation shortly after the experimental discovery. Wilczek called these hypothetical particles with fractional statistics {\em anyons} \cite{wilczek:1982} and they quickly became a central piece of many theoretical works on the FQHE \cite{haldane:1983b,tao:1983,halperin:1984,haldane:1985,jain:1990,haldane:1990,frohlich:1997}. Convincing experimental evidence for the existence of anyonic excitations in fractional quantum Hall materials has emerged in recent experiments \cite{bartolomei:2020,nakamura:2020}. 

All this motivated several efforts to construct explicit Hamiltonian models of anyons and study their mathematical and physical properties. Many of such efforts focus on interpreting anyons as charge and flux tube pairs \cite{lundholm:2014, lambert:2022}. A relation between the filling fraction and the ground state degeneracy was proved in \cite{LatticeAnyons,RationalIndex} for lattice models and in \cite{jansen:2009} for the Laughlin state. Haldane's pseudo-potential Hamiltonians have been derived in \cite{seiringer:2020}. A truncated version of a pseudo-potential for the $\nu=1/3$ filled Laughlin state introduced in \cite{nakamura:2012,bergholtz:2006} was shown to have a non-vanishing ground state gap in \cite{nachtergaele:2021a,warzel:2022}. Other aspects were investigated in a variety of models \cite{yakaboylu:2020, RationalIndex, rougerie:2023a}.

The possibility of topological quantum computation envisaged in~\cite{nayak2008non} has been a further incentive to deepen our understanding of anyons. Among the Hamiltonian models in two dimensions developed for that purpose, the most famous ones are Kitaev's quantum double models \cite{kitaev:2003} and his honeycomb model \cite{kitaev2006anyons}, as well as the Levin-Wen string-net models~\cite{levin2005string}. They do exhibit localized excitations with anyonic statistics, can be extended to describe anyon condensation~\cite{Bombin2008-uw,Levin-Wen_Condensation}, but these quasi-particles are not dynamical. In fact, the main purpose of these models is to realize, in their ground state space, a topological quantum field theory~\cite{witten1989quantum}. This has led to further developments in the classification of topological matter in the language of tensor categories~\cite{kong2014anyon,johnson2022classification}. Experimental realizations of these exotic topological phases have recently been proposed~\cite{verresen2021prediction,slagle2022quantum}.

These rigorous results recover some of the features of anyons in the theories for the FQHE mentioned above but it is fair to say that we are still lacking a coherent theory starting from first principles. In this paper we present Hamiltonian lattice models of anyons with short-range interactions of the type prevalent in condensed matter theories of real materials. It may also be realizable as an artificial system created in today's laboratories. We introduce our models as perturbations of Kitaev's quantum double models based on an arbitrary abelian group $G$ and rely on the stability of its anyon structure under short-range perturbations proved in \cite{cha:2020}. This generalizes earlier work on the case of $G=\bbZ_2$, which starts with the Toric Code model \cite{nachtergaele:2020}. These models have a unique gapped ground state on the infinite lattice, and the excitation spectrum can be described by a set of abelian anyons. They have a duality symmetry which is reflected in the structure of the anyons: they consist of two basic sets, which we call electric and magnetic charges that appear symmetrically in the Hamiltonian. Anyons that have both non-zero electric and magnetic charges obtained by fusion, appear with their own dispersion relation. We calculate the dispersion relation for single-anyon excitations and show that it exhibits a Dirac cone for these fused anyons with non-trivial braiding statistics.  We show that under a simple condition on the coupling constants and the energy and momentum, bound states appear that should be interpreted as finite-size quasi-particles with non-trivial dynamics. We also prove that bounds states do not appear when the Hamiltonian only includes separate hopping terms for electric and magnetic charges. In summary, we have constructed a family of lattice models in which abelian anyons arise naturally as charge-flux pairs with the expected properties. We explicitly find that the anyons exhibit a non-trivial holonomy with a quantized phase, consistent with the gauge and duality symmetries of the Hamiltonian. There is no need to introduce charge-flux pairs or tracer particles a priori \cite{lundholm:2016,yakaboylu:2020,lambert:2022}. The anyonic holonomy as observed in \cite{nakamura:2020} arises directly from the many-body Hamiltonian.

In Section \ref{sec: model} we introduce the family of Hamiltonians, discuss their symmetry properties, and describe the structure of the anyons and the associated superselection sectors for these models, including concrete instances of the corresponding GNS representations. The spectral analysis of the Hamiltonian in the invariant subspaces of the relevant GNS representations is carried out in Section \ref{sec: Spectrum}. We illustrate our findings with a set of figures based on numerical calculations.

%%%%%%%%%%%%%%%%%%%%%%%%%%%%%%%%%%%%%%%%%%%%%%%%%
%%%%%%%%%%%%%%%%%%%%%%%%%%%%%%%%%%%%%%%%%%%%%%%%%

\section{The model}   \label{sec: model}

%%%%%%%%%
\subsection{The (static) quantum double model}

Let us start with a quick reminder of the family of quantum double models (QDMs) with a finite Abelian group $(G,\cdot)$, which will allow us to set notations. We denote by $\Gamma$ the cell complex given by a set $\mathcal{V} = \bbZ^2$ of vertices, a set $\mathcal{E}$ of oriented edges and a set $\mathcal{F}$ of faces. We denote by $\partial$ the standard boundary map of this cell complex. For simplicity, all `vertical' edges are oriented upwards while all `horizontal' edges are pointing rightward. We attach to each edge $e\in\caE$ the $\vert G\vert$-dimensional Hilbert space $\mathcal{H}_e = \mathbb{C}^{\vert G\vert}$ and label an orthonormal basis by group elements $g \in G$: $\{\ket{g}_e:g\in G\}$; If a statement is not specific to one particular edge, we shall simply denote $\caH$. The observable algebra associated with each edge is $\cstar_e = \caL(\caH_e)$. For any finite subset $\Lambda\subset\caE$, the algebra is given by $\cstar_\Lambda = \caL(\otimes_{e\in\Lambda}\caH_e)$. As usual, $\Lambda_1\subset\Lambda_2$ yields a natural embedding $\cstar_{\Lambda_1}\hookrightarrow\cstar_{\Lambda_2}$ by the identification of $A\in\cstar_{\Lambda_1}$ with $A\otimes\idtyty_{\cstar_{\Lambda_2\setminus\Lambda_2}}$. The algebra of local observables $\cstar_{\mathrm{loc}}$ is the union of these finite volume algebras, and the C*-algebra of quasi-local observables $\cstar$ is the completion of $\cstar_{\mathrm{loc}}$ with respect to the operator norm.

For any $h\in G$, let $L^h$ be the linear operator defined by its action on the basis:
\begin{equation*}
L^h\ket{g} = \ket{hg}.
\end{equation*}
Similarly, for any character $\chi \in \hat G$, let $T^\chi$ be the linear operator defined by
\begin{equation*}
T^\chi\ket{g} = \bar\chi(g)\ket{g},
\end{equation*}
where $\bar \chi$ denotes the inverse of the character $\chi$. These operators form a unitary representation of $G$, respectively $\hat G$, on $\caH$, and $(L^h)\str = L^{\bar h}, (T^\chi)\str = T^{\overline \chi}$, where ${\bar h}=h^{-1}$, the inverse in $G$. Moreover, they satisfy the following commutation relation:
\begin{equation}\label{eqn:local_relations}
T^\chi L^h = \bar\chi(h)L^hT^\chi.
\end{equation}

It will be convenient to label operators not only by an edge, but by pairs $(v,e)$ with $v\in \partial e$ or $(f,e)$ with $e\in \partial f$. For a vertex-edge pair, $\caL^h(v,e) = L^h_e$ and $\caT^\chi(v,e) = T^\chi_e$ if the edge is outgoing from $v$, while $\caL^h(v,e) = L^{\bar h}_e$ and $\caT^\chi(v,e) = T^{\bar\chi}_e$ if it is incoming. Similarly for a face-edge pair with incoming (respectively outgoing) being replaced by the face being on the left (respectively on the right) of the edge. These conventions are illustrated in Figure~\ref{fig:convention}. 

\begin{figure*}[ht]
    \includegraphics[width = 0.8\textwidth]{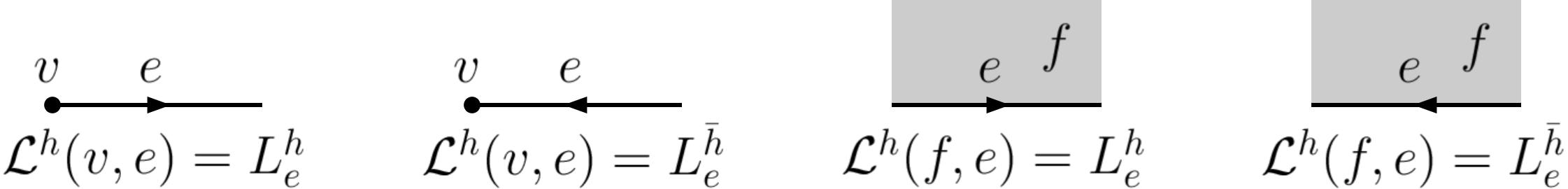}
    \caption{The $\caL$ operators associated with pairs $(v,e)$ and $(f,e)$. A parallel definition is made for the $\caT$ operators.}
    \label{fig:convention}
\end{figure*}

A string $\gamma$ is a sequence of pairs $\{(v_i,e_i):v_i\in\caV, e_i\in\caE\}$ such that $\partial e_i = \{v_i,v_{i+1}\}$. If it is finite, the last vertex is not listed but it is uniquely determined by $(v_N,e_N)$ and we denote it $\partial_1\gamma$; analogously, $\partial_0\gamma = v_1$, and we let $\partial\gamma = \{\partial_0\gamma,\partial_1\gamma\}$. We define similarly a dual string $\overline\gamma$, where faces replace vertices, and the edge $e_i$ is the unique edge in $\partial f_i\cap \partial f_{i+1}$, see Figure~\ref{fig:Strings}.

\begin{figure*}[ht]
    \includegraphics[width = 0.55\textwidth]{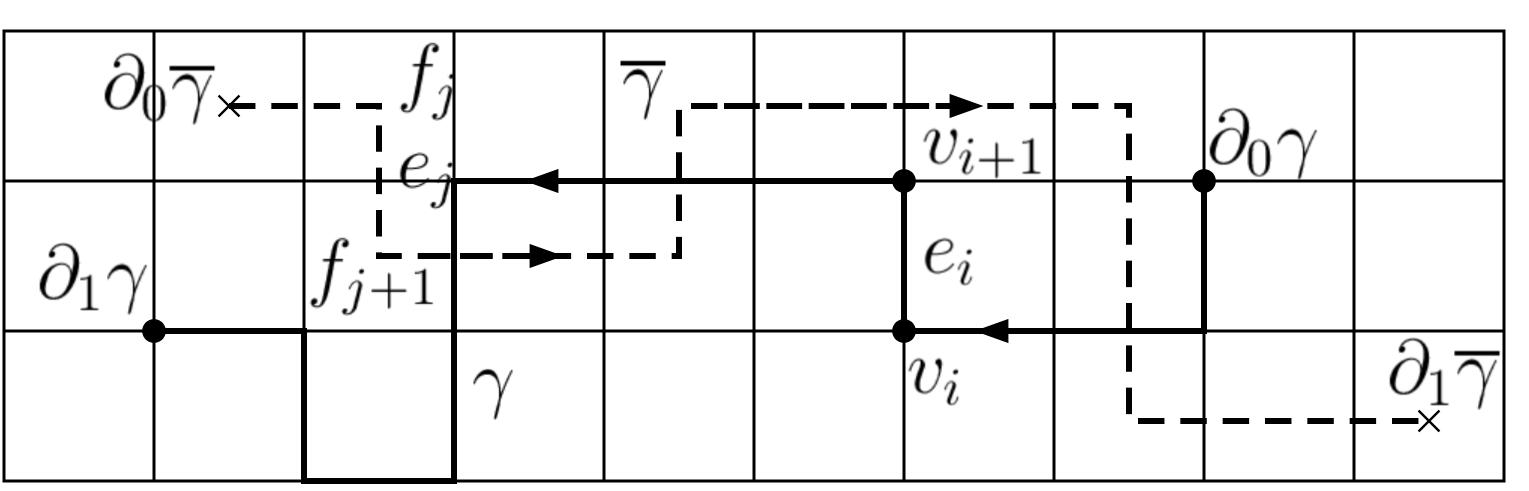}
    \caption{A geometric string (solid line) and dual string (dashed line). The ordering of the vertices/faces defines an orientation of strings, and correspondingly an initial and final vertex/face.}
    \label{fig:Strings}
\end{figure*}

 We now define the following unitary string operators
\begin{equation}\label{String ops}
       \caF^g_{\overline\gamma} = \underset{(f,e) \in \overline{\gamma}}{\prod} \mathcal{L}^g (f ,e)\qquad
    \caF^{\chi}_{\gamma}= \underset{(v,e) \in \gamma}{\prod} \mathcal{T}^{\chi} (v ,e)
\end{equation}
for $g\in G, \chi \in \hat G$. Note that the order in which the operators appear is irrelevant in the present context. The boundary of a face $f\in \mathcal{F}$ is naturally associated with a string $\gamma(f)$, while a vertex $v \in \mathcal{V}$ is naturally associated with a dual string $\overline \gamma(v)$. We shall always consider these strings to carry a counterclockwise orientation. We define for any face $f$ and vertex $v$ 
\begin{equation}
\face^h_f= \frac{1}{{\vert \hat G\vert}} \sum_{\chi \in \hat G} \chi(h) \caF^{\chi}_{\gamma(f)},
\qquad
\plus{\chi}_v= \frac{1}{{\vert G\vert}} \sum_{g \in G} \overline{\chi}(g)\caF^g_{\overline\gamma(v)}.
\end{equation}
The normalization is chosen so as to make these operators into (orthogonal) projections, and one can check that they are mutually commuting: In fact,
\begin{equation}\label{AB projections}
\caA_v^\chi\caA_v^\xi = \delta_{\chi,\xi}\caA_v^\chi = \caA_v^\xi\caA_v^\chi,\qquad
\caB_f^g\caB_f^h = \delta_{g,h}\caB_f^g = \caB_f^h \caB_f^g,
\end{equation}
by the orthogonality of the characters. Their action on basis vectors is illustrated in Figure~\ref{fig:AB_convention}.

\begin{figure}[htb]
    \includegraphics[width=\textwidth]{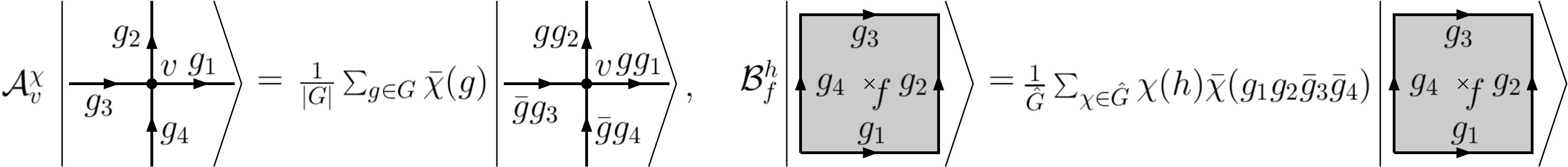}
    \caption{The action of the vertex and face operators.}
    \label{fig:AB_convention}
\end{figure}

For any finite subset $\Lambda\subset\caE$, the Hamiltonian of the (static) quantum double model is given by
\begin{equation}\label{QD}
    H^0_\Lambda = \underset{v\in\Lambda}{\sum} (\idtyty - \plus{\iota}_v) + \underset{f\in\Lambda}{\sum} ( \idtyty - \face^1_f),
\end{equation}
where $\iota\in\hat G $ is the constant character $\iota(g) = 1$, $1\in G$ is the neutral element, and $v\in\Lambda$ means that $\{e\in\caE:v\in\partial e\}\subset\Lambda$, and similarly for $f\in\Lambda$. This choice of boundary condition is arbitrary and made only for simplicity here; it will not play a role in the following discussion.

In fact, we shall be interested in the infinite-volume limit of the model. We recall some elementary properties of the model exhibited in the original~\cite{KitaevToricCode}, and refer to~\cite{Bombin2008-uw,SvenReview,AlickiFannesHorodecki,FiedlerPieter,Cha_GS} for a complete set of proofs. The sum~(\ref{QD}) is not convergent in norm, but it remains meaningful as the generator of the dynamics: for any local observable $A$, the limit
\begin{equation}\label{eq:derivation}
\delta^0(A) = \lim_{\Lambda\to\bbZ^2}\iu[H^0,A]
\end{equation}
exists for all $A\in\cstar_{\mathrm{loc}}$ and extends to a densely defined unbounded *-derivation on $\cstar$. In this infinite volume limit, the model has a unique translation-invariant ground state $\omega_0$ which is frustration-free and gapped -- we shall henceforth call it the vacuum state. $\omega_0$ is a ground state in the sense of being a positive normalized functional on $\cstar$ such that
\begin{equation}\label{GS condition}
-\iu\omega_0(A\str \delta^0(A))\geq0
\end{equation}
for all $A\in\cstar_{\mathrm{loc}}$. That it is gapped means that there is $g>0$ such that 
\begin{equation*}
-\iu\omega_0(A\str \delta^0(A))\geq g\omega_0(A\str A)
\end{equation*}
for all $A\in\cstar_{\mathrm{loc}}$ such that $\omega_0(A) = 0$. Specifically, $g=1$ here. Finally, the frustration-freeness is expressed  by
\begin{equation}\label{FF condition}
\omega_0(\plus{\iota}_v)=1,\qquad\omega_0(\face^1_f) = 1,
\end{equation}
for all $v\in\caV,f\in\caF$. 

Additional properties of the vacuum state are that $\omega_0(\plus{\chi}_v)=0=\omega_0(\face^g_f)$ for all non-trivial elements of $G$ (respectively $\hat G$) from which we deduce that
\begin{equation}\label{Invariance of vacuum}
\omega_0(\iu[\caA_v^\chi,A])=0,\qquad \omega_0(\iu[\caB_f^g,A])=0,
\end{equation}
for all $A\in\cstar$ by the Cauchy-Schwarz inequality.

Let us briefly come back to the string operators introduced above. They satisfy the following commutation relations:
\begin{equation}
\label{eqn:string_proj_algebra}
\begin{split}
& \plus{\chi}_{\partial_0 \gamma} \eigstr{\xi}{\gamma} = \eigstr{\xi}{\gamma} \plus{\chi\bar\xi}_{\partial_0 \gamma}, \qquad \plus{\chi}_{\partial_1 \gamma} \eigstr{\xi}{\gamma} = \eigstr{\xi}{\gamma} \plus{\chi\xi}_{\partial_1 \gamma}, \\  
& \face^h_{\partial_0 \overline{\gamma}} \dstr{g}{\gamma} = \dstr{g}{\gamma}\face^{h\overline{g}}_{\partial_0 \overline{\gamma}},
\qquad \face^h_{\partial_1 \overline{\gamma}} \dstr{g}{\gamma} =  \dstr{g}{\gamma} \face^{hg}_{\partial_1 \overline{\gamma}},
\end{split}
\end{equation}
where we used that $\chi(\bar g) = \overline{\chi(g)} = \overline\chi(g)$. Moreover, $\plus{\chi}_v$ commutes with $\eigstr{\xi}{\gamma}$ for $v \notin \partial \gamma$ and $\face^h_f$ commutes with $\dstr{g}{\gamma}$ for $ f \notin \partial \overline{\gamma}$. We further note that any (dual) string operators commute with all $\caA$'s and all $\caB$'s. We shall also need commutation relations between string operators of different types:
\begin{equation}\label{strings crossing}
\caF_\gamma^\chi \caF_{\overline\gamma}^g = (\chi(g))^{c(\gamma,\overline\gamma)}\caF_{\overline\gamma}^g \caF_\gamma^\chi 
\end{equation}
Here, $c(\gamma,\overline\gamma)$ is the number of signed crossings of the two strings: each crossing, from right to left, of $\overline\gamma$ and $\gamma$ yields a factor $\chi(g)$; since a change in the orientation of a string amounts to the exchange of the charge with its inverse, a crossing from left to right yields $\overline\chi(g) = \chi(\overline g)$.

Localized excitations above the ground state are simple to characterize. These quasi-particles arise from the violation of the conditions~(\ref{FF condition}). Indeed, we shall say that a state $\omega$ carries an electric charge at vertex $v$ if $\omega(\plus{\iota}_v) = 0$, or a magnetic flux through face $f$ if $\omega_0(\face^1_f) = 0$. The commutation relations~(\ref{eqn:string_proj_algebra}) yield that
\begin{equation*}
\omega_0((F_\gamma^\chi)\str A_{\partial_0\gamma}^\xi F_\gamma^\chi)=\delta_{\xi,\chi},\qquad
\omega_0((F_\gamma^\chi)\str A_{\partial_1\gamma}^\xi F_\gamma^\chi)=\delta_{\xi,\overline\chi},
\end{equation*}
so that it is natural to interpret the string operator $\caF^{\xi}_{\gamma}$ as creating a pair of quasi-particles with opposite electric charges $(\chi,\overline\chi)$ at either ends of $\gamma$, while $\caF^{g}_{\overline\gamma}$ creates a pair of fluxes $(g,\bar g)$ on the two extremal faces of $\overline\gamma$. This is consistent with the fact that if $\gamma_1,\gamma_2$ are such that $\partial_1\gamma_1 = \partial_0\gamma_2$, then $\caF^{\xi}_{\gamma_1}\caF^{\xi}_{\gamma_2} = \caF^{\xi}_{\gamma_1\cdot\gamma_2}$, where $\gamma_1\cdot\gamma_2$ denotes the concatenation of two strings.

Since this Hamiltonian is a sum of strictly local commuting terms, we have that
\begin{equation}\label{no dynamics}
\delta^0(\plus{\chi}_v) = 0 = \delta^0(\face^g_f),
\end{equation}
for all $\chi\in\hat G$ and $g\in G$, showing that excitations are not dynamical. We shall in the next section introduce hopping terms for these quasi-particles which will define a dynamical version of the QDM. 

%%%%%%%%%
\subsection{The dynamical quantum double model}

Hopping of electric charges between two vertices along an edge $e\in\caE$ will be induced by the following operator
\begin{equation}
T_e^\epsilon = \sum_{\chi\in\hat G,\chi\neq\iota} \left(
\mathcal{F}^{\overline{\chi}}_{(v,e)} \plus{\iota}_{v'} \plus{\chi}_{v} + \mathcal{F}^{\overline\chi}_{(v',e)} \plus{\iota}_{v} \plus{\chi}_{v'}
\right)
\end{equation}
where the edge $e$ is between $v$ and $v'$. Since $(\mathcal{F}^{\overline{\chi}}_{(v,e)})\str = \mathcal{F}^{\chi}_{(v,e)}=\mathcal{F}^{\overline{\chi}}_{(v',e)}$, the second term is the adjoint of the first one and hence $T_e^\epsilon$ is self-adjoint. Flux hopping between two faces across $e\in\caE$ can be defined similarly,
\begin{equation}
T_e^\mu = \sum_{h\in G,h\neq 1} 
\left(
\mathcal{F}^{\overline{h}}_{(f,e)} \face^{1}_{f'} \face^{h}_{f} + \mathcal{F}^{\bar h}_{(f',e)} \face^{1}_{f} \face^{h}_{f'}
\right)
\end{equation}
where the edge $e$ is shared by $f$ and $f'$.

Finally, and in accordance with the fact that a flux-charge pair can be considered a particle of its own type, we also introduce a hopping such a pair:
\begin{align}
T_{e}^{\epsilon\mu} = \sum_{\substack{h\in G,h\neq 1 \\ \chi\in\hat G,\chi\neq\iota}} \Big\{
&\left(
\mathcal{F}^{\overline{\chi}}_{(v,e)} \plus{\iota}_{v'} \plus{\chi}_{v} + \mathcal{ F}^{\overline \chi}_{(v',e)} \plus{\iota}_{v} \plus{\chi}_{v'}
\right)\left(\caB_f^h + \caB_{f'}^h\right) 
\nonumber \\
&+ \left(
\mathcal{F}^{\bar{h}}_{(f,e)} \face^{1}_{f'} \face^{h}_{f} + \mathcal{F}^{\bar h}_{(f',e)} \face^{1}_{f} \face^{h}_{f'}
\right)\left(\caA_v^\chi + \caA_{v'}^\chi\right)
\Big\}\label{em hopping}
\end{align}
where $e,v,v',f,f'$ are illustrated in Figure~\ref{fig:hopping of pair}.

\begin{figure}[htb]
    \includegraphics[width=0.2\textwidth]{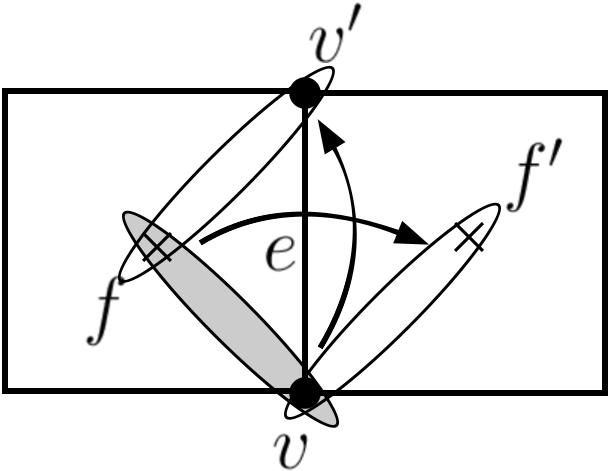}
    \caption{Two possible hoppings of a composite particle across or along a given edge $e$.}
    \label{fig:hopping of pair}
\end{figure}

With these definitions, we define the full Hamiltonian of the model by
\begin{equation*}
H = H^0 + \lambda_\epsilon H^\epsilon + \lambda_\mu H^\mu + \lambda_{\epsilon\mu} H^{\epsilon\mu},
\end{equation*}
where $H^\sharp = \sum_{e\in\caE} T^\sharp_e$, with $\sharp$ stands for $\epsilon$, $\mu$ or $\epsilon\mu$, and the sum is interpreted as in~(\ref{eq:derivation}). We denote
\begin{equation*}
\delta = \lim_{\Lambda\to\bbZ^2}\iu[H_\Lambda,\cdot] = \delta^0+\delta^\epsilon+\delta^\mu+\delta^{\epsilon\mu},
\end{equation*}
the generator of the dynamics in the infinite volume limit, while $\delta^\sharp$ are the derivations associated with $H^\sharp$. The corresponding dynamics is denoted $\tau_t$, respectively $\tau_t^\sharp$. 

The following proposition justifies the nomenclature that has been used thus far in describing the various terms of the Hamiltonian. The first part~(\ref{cont eq}) is a continuity equation for the charge $\zeta\in\hat G$ and it should be contrasted with~(\ref{no dynamics}). The operator $\caN_\Lambda^\zeta$ has the natural interpretation of the total charge of type $\zeta$ in the volume $\Lambda$. The second part of the proposition shows that the hopping terms are gauge invariant. 

\begin{prop}\label{prop:continuity equation}
Let $\zeta\in\hat G, \zeta\neq\iota,$ and let $\Lambda$ be a finite subset of $\caE$. Let $\caN_\Lambda^\zeta = \sum_{v\in\Lambda}\caA^\zeta_v$. \\
(i) Let $\caJ^\zeta_{(v,e)} = \iu\left(\caF^{\bar\zeta}_{(v,e)}\caA^\iota_{v'}\caA^\zeta_v - \caF^{\bar\zeta}_{(v',e)}\caA^\iota_{v}\caA^\zeta_{v'}\right)$. Then
\begin{equation}\label{cont eq}
\delta^\epsilon(\caN^\zeta_\Lambda) = \sum_{(v,e)\in\partial\Lambda}\caJ^\zeta_{(v,e)},
\end{equation}
where $(v,e)\in\partial\Lambda$ if $\partial e\cap\Lambda \neq \emptyset$ and $\partial e\cap\Lambda^c \neq \emptyset$. Moreover, $\delta^\mu(\caN^\zeta_\Lambda) = 0$. \\
(ii) For any edge $e\in\caE$, 
\begin{equation*}
\sum_{v\in\caV}\left[T_e^\epsilon, \caA^\zeta_v\right] = 0.
\end{equation*}
\end{prop}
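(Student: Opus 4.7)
The proof is a direct computation that unfolds in three stages: first I verify gauge invariance (ii) at the level of a single edge, then I use it to reduce the continuity equation (i) to boundary contributions, and finally I treat the magnetic part by observing a stronger termwise commutation.

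For (ii), fix an edge $e$ with $\partial e=\{v,v'\}$. Only $\caA^\zeta_v$ and $\caA^\zeta_{v'}$ have any chance of not commuting with $T_e^\epsilon$, so the infinite sum really reduces to $[T_e^\epsilon,\caA^\zeta_v+\caA^\zeta_{v'}]$. I would compute the commutator of each summand $\caF^{\bar\chi}_{(v,e)}\caA^\iota_{v'}\caA^\chi_v$ with $\caA^\zeta_v$ and $\caA^\zeta_{v'}$ separately, using the string/projector relations~(\ref{eqn:string_proj_algebra}) to push $\caA^\zeta_{v}$ (respectively $\caA^\zeta_{v'}$) through $\caF^{\bar\chi}_{(v,e)}$, turning it into $\caA^{\zeta\chi}_v$ (respectively $\caA^{\zeta\bar\chi}_{v'}$), and then collapsing with the neighboring projection by~(\ref{AB projections}). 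Since $\zeta\neq\iota$, the factor $\caA^{\zeta\chi}_v\caA^\chi_v$ vanishes while $\caA^\chi_v\caA^\zeta_v=\delta_{\chi,\zeta}\caA^\chi_v$; the roles of $v$ and $v'$ are exchanged in the commutator with $\caA^\zeta_{v'}$, producing the opposite sign. The same bookkeeping for the adjoint summand $\caF^{\bar\chi}_{(v',e)}\caA^\iota_{v}\caA^\chi_{v'}$ gives the symmetric result, and the two contributions cancel when one adds $\caA^\zeta_v+\caA^\zeta_{v'}$. This establishes (ii).

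For the first half of (i), write $\delta^\epsilon(\caN^\zeta_\Lambda)=\iu\sum_{e\in\caE}[T_e^\epsilon,\caN^\zeta_\Lambda]$ and partition the sum according to how $\partial e$ meets $\Lambda$. Interior edges ($\partial e\subset\Lambda$) contribute $\iu[T_e^\epsilon,\caA^\zeta_v+\caA^\zeta_{v'}]=0$ by (ii); edges with $\partial e\subset\Lambda^c$ contribute zero trivially. For a boundary edge $e$ with $v\in\Lambda,\ v'\in\Lambda^c$, only $\caA^\zeta_v$ survives, and the computation carried out in (ii) already identifies $\iu[T_e^\epsilon,\caA^\zeta_v]$ as exactly
\begin{equation*}
\caF^{\bar\zeta}_{(v,e)}\caA^\iota_{v'}\caA^\zeta_v-\caF^{\bar\zeta}_{(v',e)}\caA^\iota_v\caA^\zeta_{v'}=\caJ^\zeta_{(v,e)},
\end{equation*}
which is the claimed current. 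Summing over all boundary pairs yields~(\ref{cont eq}).

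For $\delta^\mu(\caN^\zeta_\Lambda)=0$, I argue that $[T_e^\mu,\caA^\zeta_v]=0$ edge-by-edge and vertex-by-vertex. Two facts suffice: (a) the face projectors $\caB^h_f$ commute with every vertex projector $\caA^\zeta_v$, as recalled after~(\ref{AB projections}); (b) the single-edge dual string operators $\caF^{\bar h}_{(f,e)}=\caL^{\bar h}(f,e)$ entering $T_e^\mu$ are built from the shift operators $L^h$, while the star operators $\caA^\zeta_v$ are also built from such shifts acting on the edges around $v$. Because $G$ is abelian, all $L^h$ operators on a given edge commute, and $L^h$ operators on different edges commute trivially. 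Combined with the general remark in the excerpt that any dual string commutes with every $\caA$, this gives the termwise vanishing. The main subtlety is the orientation/character bookkeeping in stage one; once the convention $\partial_0=v,\ \partial_1=v'$ for the string $(v,e)$ is fixed and~(\ref{eqn:string_proj_algebra}) is applied consistently, the rest is routine.
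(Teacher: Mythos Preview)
Your proof is correct and follows essentially the same route as the paper's: compute $[T_e^\epsilon,\caA^\zeta_v]$ using~(\ref{eqn:string_proj_algebra}) and~(\ref{AB projections}), obtain $\caJ^\zeta_{(v,e)}$, use its antisymmetry under $v\leftrightarrow v'$ for (ii), and then reduce (i) to boundary edges; the paper presents these steps in the opposite order (single-vertex commutator first, then (ii), then (i)), but the content is the same. One cosmetic slip: in your displayed equation the right-hand side $\caJ^\zeta_{(v,e)}$ already carries the factor~$\iu$, so either drop the $\iu$ in front of $[T_e^\epsilon,\caA^\zeta_v]$ or insert an $\iu$ on the left of the displayed expression.
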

\begin{proof}
The operators $T_e^\epsilon$ and $\caA^\zeta_v$ do not commute only if $v$ is one of the vertices at the boundary of $e$, in which case as short calculation yields
\begin{align*}
[T_e^\epsilon, \caA^\zeta_v] &= \sum_{\chi\in\hat G}
\left(\left[\caF^{\overline \chi}_{(v,e)},\caA^\zeta_v\right] \caA^\iota_{v'}\caA^\chi_v + 
\left[\caF^{\overline \chi}_{(v',e)},\caA^\zeta_v\right] \caA^\iota_v\caA^\chi_{v'}\right) \\
&= \sum_{\chi\in\hat G}\left(\caF^{\overline \chi}_{(v,e)}(\caA^\zeta_v - \caA^{\zeta\chi}_v)\caA^\iota_{v'}\caA^\chi_v
+ \caF^{\overline \chi}_{(v',e)}(\caA^\zeta_v - \caA^{\zeta\overline\chi}_v)\caA^\iota_v\caA^\chi_{v'}
\right).
\end{align*}
With~(\ref{AB projections}) and since $\zeta\neq\iota$, we have that $\caA^{\zeta\chi}_v\caA^\chi_v = 0$ as well as $\caA^\zeta_v\caA^\iota_v=0$, and this further imposes that $\chi = \zeta$ in both remaining terms, namely $\iu [T_e^\epsilon, \caA^\zeta_v] = \caJ^\zeta_{(v,e)}$. Claim (ii) immediately follows since $[T_e^\epsilon, \sum_v\caA^\zeta_v] = \caJ^\zeta_{(v_1,e)} + \caJ^\zeta_{(v_2,e)} = 0$ by the antisymmetry of $\caJ^\zeta_{(v,e)}$ under the exchange of the vertices. Claim (i) follows similarly: By the same argument, the contribution from all edges that are internal to $\Lambda$ vanish, the only remaining ones being those on the boundary. Finally, $\delta^\mu(\caN^\zeta_\Lambda) = 0$ follows from the fact that both $\caA$'s and dual strings are products of $\caL$ operators, which are all mutually commuting. 
\end{proof}

We conclude this section by pointing out that a natural extension of the Hamiltonian above would be to introduce an external `vector potential' by twisting the various hopping term, as for example
\begin{equation*}
\ep{-\iu\beta_e}\mathcal{F}^{\overline{\chi}}_{(v,e)} \plus{\iota}_{v'} \plus{\chi}_{v} + \ep{\iu\beta_e}\mathcal{F}^{\overline\chi}_{(v',e)} \plus{\iota}_{v} \plus{\chi}_{v'}.
\end{equation*}
An edge-dependent $\beta:\caE\to\bbR$ would allow one to add external `fluxes' through plaquettes. Of course, similar variations are possible with the other hopping terms. As we shall see shortly, such a flux attachment that is typical of anyons will appear intrinsically as soon as pairs of particles are present in the system.

%%%%%%%%%
\subsection{Superselection sectors and anyons}

Recall that the state $\omega_0$ is the unique translation invariant ground state. We briefly discussed above the states $\omega_0\circ\Gamma_\gamma^\chi$ where $\Gamma_\gamma^\chi = \caF_\gamma^{\overline{\chi}}(\cdot)\caF_\gamma^{{\chi}}$ creates a pair of excitations $(\chi,\overline\chi)$ localized at the ends of the string on top of the existing vacuum. 

If $\gamma = \{(v_i,e_i):i\in\bbN\}$ is an infinite string, the operator $\caF_\gamma^{\chi}$ is not well-defined, but the corresponding automorphism $\Gamma_\gamma^\chi$ exists. Indeed, let $\gamma^{(N)} = \{(v_i,e_i):i=1,\ldots,N\}$ be its truncation to the first $N$ edges. For any $A\in\cstar_{\mathrm{loc}}$, the limit
\begin{equation*}
\Gamma^\chi_\gamma(A) = \lim_{N\to\infty}\caF_{\gamma^{(N)}}^{\overline\chi} A \caF_{\gamma^{(N)}}^{\chi}
\end{equation*}
exists since the sequence of operators on the right hand side is eventually constant, and $\Gamma_\gamma^\chi$ extends to an automorphism of $\cstar$ since $\Vert \Gamma_\gamma^\chi(A)\Vert = \Vert A\Vert$. In particular, the linear functional $\omega_0\circ\Gamma^\chi_\gamma$ defines a state on $\cstar$, which we shall refer to as a charged state. It is so that 
\begin{equation*}
(\omega_0\circ\Gamma^\chi_\gamma)(\caA_v^\iota) = \begin{cases}
0 & \text{if }v=\partial_0\gamma \\ 1 & \text{otherwise}
\end{cases}\quad\text{and}\quad (\omega_0\circ\Gamma^\chi_\gamma)(\caB_f^1) = 1\:\text{ for all }f\in\caF.
\end{equation*}
Now, if $\lambda$ is a (finite) closed loop, namely $\partial_0\lambda = \partial_1\lambda$, then the action of $F_\lambda^\chi$ is trivial on the vacuum state, namely
\begin{equation}\label{closed loops are trivial}
\omega_0(\caF_{\lambda}^{\overline\chi} A \caF_{\lambda}^{\chi}) = \omega_0(A)\qquad\text{($\lambda$ closed)}
\end{equation}
for all $A\in\cstar$. This, and the composition property of string operators implies that 
\begin{equation*}
\omega_0\circ\Gamma^\chi_\gamma = \omega_0\circ\Gamma^\chi_{\tilde \gamma}
\end{equation*}
for all strings $\gamma$ such that $\partial\gamma = \partial\tilde\gamma$. If $\gamma$ is an infinite string, this shows that the state $\omega_0\circ\Gamma^\chi_\gamma$ depends only on the position of the initial vertex $v = \partial_0\gamma$, which justifies the notation $\omega_v^\chi$. Physically, the state $\omega_v^\chi$ is a state carrying one electric charge $\chi$ at the vertex $v$ and remains indistinguishable from the vacuum state everywhere else. It can be shown (\cite{FiedlerPieter},\cite{Cha_GS}) that $\omega_v^\chi$ is a ground state in the sense of~(\ref{GS condition}) for any $\chi\in\hat G$ and any $v\in\caV$.

Let $(\caH_0,\pi_0,\Omega_0)$ be the GNS representation of the vacuum state --- we shall henceforth refer to it as the vacuum representation. 
For a given state, all GNS representations are unitarily equivalent.
A charged state $\omega_v^\chi$ is {\em not} equivalent to $\omega_0$. Its GNS representation can be given by $(\caH_0,\pi_0\circ\Gamma_\gamma^\chi,\Omega_0)$, where $\partial_0\gamma = v$:
\begin{equation*}
\omega_v^\chi(A) = \langle \Omega_0|(\pi_0\circ\Gamma_\gamma^\chi)(A) \Omega_0\rangle = \omega_0\circ\Gamma_\gamma^\chi(A).
\end{equation*}
 Similarly, if $\chi\neq\xi$, then $\omega_v^\chi$ and $\omega_v^\xi$ are inequivalent.

While the state $\omega_v^\chi$ depends only on the endpoint $v$ of the string $\gamma$, the GNS representation does depend on the entire string $\gamma$. The uniqueness of the GNS representation implies that for any two strings $\gamma_1,\gamma_2$ starting at $v$, the representations $\pi_0\circ\Gamma_{\gamma_1}^\chi,\pi_0\circ\Gamma_{\gamma_2}^\chi$ are equivalent: this is very reminiscent of the freedom of choice of the flux line needed to introduce a flux piercing the plane in a 2d electron gas. We shall make a choice which will be convenient for the rest of the discussion: The string associated with the state $\omega_v^\chi$ is taken to be extending vertically down from the vertex~$v$, as illustrated in Figure~\ref{fig:gauge choice}.

The construction just discussed can be carried out with dual strings to yield charged ground states of the form $\omega_{f}^g = \omega_0\circ\Gamma_{\overline\gamma}^g$, and we pick those strings to be oriented vertically upwards from the face $f$.

\begin{figure}[htb]
    \includegraphics[width=0.25\textwidth]{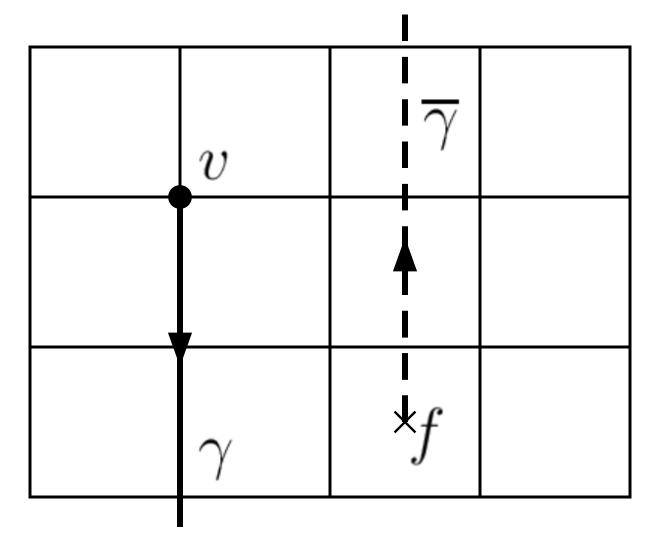}
    \caption{The gauge fixing for both types of states $\omega^\sharp_v$ and $\omega^\sharp_f$.}
    \label{fig:gauge choice}
\end{figure}

A superselection sector of $\delta^0$ is a class of unitarily equivalent $\delta^0$-ground state representations. The above discussion shows that group elements and characters label superselection sectors. We claim furthermore that states carrying the same charge but located at different vertices are equivalent. Indeed, the composition property of the string operators implies that two states $\omega_v^\chi$ and $\omega_{v'}^\chi$ are related by
\begin{equation}\label{algebraic transporters}
\omega_{v'}^\chi(A) = \omega_v^\chi(\caF^{\overline \chi}_{\gamma(v'\to v)} A \caF^{ \chi}_{\gamma(v'\to v)}),
\end{equation}
where $\gamma(v'\to v)$ is any string such that $\partial_0\gamma(v'\to v) = v'$ and $\partial_1\gamma(v'\to v) = v$. The operator $\caF^{ \chi}_{\gamma(v\to v')}$ is a unitary element of the algebra, so that the two states are indeed unitarily equivalent. That these are all possible sectors is the following result of~\cite{Cha_GS}:
\begin{thm}
The complete set of superselection sectors of $\delta^0$ is labelled by elements of $G\times\hat G$. 
\end{thm}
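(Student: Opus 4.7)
The plan splits into two halves: realizability and distinctness of all labels in $G\times\hat G$, which is essentially already assembled in the text above, and completeness, which is the harder half. For realizability, for each $(\chi,g)\in\hat G\times G$ fix a vertex $v$, a face $f$, and disjoint infinite strings $\gamma$ (down from $v$) and $\overline{\gamma'}$ (up from $f$); the limit construction already used to define $\Gamma_\gamma^\chi$ extends verbatim to the composite automorphism $\rho_{\chi,g} = \Gamma_\gamma^\chi\circ \Gamma_{\overline{\gamma'}}^g$, and the state $\omega_0\circ\rho_{\chi,g}$ is again a $\delta^0$-ground state because, by~(\ref{no dynamics}), each $\caF$-operator commutes with $\delta^0$ on local observables. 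The associated GNS representation is $\pi_{\chi,g} = \pi_0\circ\rho_{\chi,g}$. Distinctness for $(\chi,g)\neq(\chi',g')$ follows from the commutation relations~(\ref{eqn:string_proj_algebra}): the expectations $\omega_0\circ\rho_{\chi,g}(\caA_v^\xi)$ and $\omega_0\circ\rho_{\chi,g}(\caB_f^h)$ are scalars that separate the two pairs, and unitarily equivalent GNS representations necessarily agree on all such local abelian data.

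The main obstacle is completeness, namely showing that every irreducible $\delta^0$-ground state representation is unitarily equivalent to some $\pi_{\chi,g}$. I would follow the cone-selection approach of \cite{FiedlerPieter,Cha_GS}. Impose a Doplicher--Haag--Roberts type criterion: an irreducible representation $\pi$ belongs to the class if, for every infinite cone $C\subset\caE$, $\pi|_{\cstar_{C^c}}$ is unitarily equivalent to $\pi_0|_{\cstar_{C^c}}$. The first sub-step is to verify that every $\delta^0$-ground state representation satisfies this criterion. This leans on the frustration-freeness~(\ref{FF condition}) together with the fact that outside $C$ the algebra $\cstar_{C^c}$ is generated by the stabilizer projections $\{\caA_v^\iota,\caB_f^1:v,f\notin C\}$ and their commutant; any $\delta^0$-ground state must be supported on the joint $+1$ eigenspace of these stabilizers far from $C$, and hence looks like the vacuum there up to inner equivalence. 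Once the criterion is established, a standard argument (collapsing amplimorphisms to honest automorphisms since the abelian charges are one-dimensional) puts $\pi$ in the form $\pi_0\circ\rho$ with $\rho$ a cone-localized $*$-automorphism.

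The label $(\chi,g)$ is then extracted from $\rho$ by testing it against closed loop operators encircling its cone of localization: for a dual loop $\overline\lambda$ and a geometric loop $\lambda$ around $C$, the identities~(\ref{strings crossing}) and~(\ref{closed loops are trivial}) force
\begin{equation*}
\rho(\caF_{\overline\lambda}^{g'}) = \chi(g')\,\caF_{\overline\lambda}^{g'},\qquad \rho(\caF_\lambda^{\chi'}) = \overline{\chi'(g)}\,\caF_\lambda^{\chi'},
\end{equation*}
for a unique pair $(\chi,g)\in\hat G\times G$ independent of the loops chosen. A unitary intertwiner between $\pi_0\circ\rho$ and $\pi_{\chi,g}$ is then constructed from finite string operators connecting the localization cone of $\rho$ to the canonical reference vertex $v$ and face $f$ used in the definition of $\pi_{\chi,g}$, using~(\ref{algebraic transporters}) as a template. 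The genuinely delicate step throughout is ruling out ``non-stringy'' excitations at the level of the cone criterion: one must show that every local stabilizer violation can, modulo inner equivalence, be transported to infinity along a cone by $\caF$-operators, so that no hidden sector exists beyond those exhibited above. For abelian $G$ this follows from the explicit commuting-projector structure of $H^0$ and the flux/charge duality~(\ref{eqn:string_proj_algebra}) via the arguments of \cite{Cha_GS}, yielding bijectivity of $(\chi,g)\mapsto[\pi_{\chi,g}]$ onto the set of superselection sectors.
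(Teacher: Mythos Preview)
The paper does not prove this theorem at all: it is stated explicitly as ``the following result of~\cite{Cha_GS}'' and cited without proof. So there is no in-paper argument to compare your proposal against; the theorem functions here purely as a quoted structural fact.

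Your proposal is a reasonable outline of the strategy actually used in the cited references \cite{FiedlerPieter,Cha_GS}: construct the cone-localized automorphisms, verify they yield inequivalent ground state sectors, and then prove completeness via a DHR-type selection criterion and extraction of the charge label through holonomy with closed loop operators. Two remarks on the sketch itself. First, your distinctness argument is loose: unitarily equivalent GNS representations need not agree on expectation values in their respective cyclic vectors, so ``agreeing on local abelian data'' is not the right invariant. The correct obstruction is that the composite $\rho_{\chi',g'}^{-1}\circ\rho_{\chi,g}$ acts nontrivially on closed loop operators encircling the localization region, and this action cannot be implemented by any unitary in $\pi_0(\cstar)''$ (equivalently, the total charge is a superselection observable). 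Second, your ``first sub-step''---that every irreducible $\delta^0$-ground state representation satisfies the cone criterion---is precisely the nontrivial content of \cite{Cha_GS}; you correctly flag it as delicate, but the frustration-free heuristic you offer (stabilizers generate $\cstar_{C^c}$) is not literally true and does not by itself rule out sectors carrying defects that cannot be pushed to infinity along a cone. The actual argument in \cite{Cha_GS} goes through a careful analysis of the ground state simplex and its extremal points, which is more involved than your sketch suggests.
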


As was already pointed out in the original paper \cite{KitaevToricCode}, the quasi-particles associated with each superselection sector are Abelian anyons. In the present language, this can be made precise in the sense of the DHR analysis~\cite{DHR1}, see again~\cite{Cha_GS}. While the anyonic nature of particles has a general abstract treatment in terms of the superselection sectors, see~\cite{buchholzFredenhagen}, or~\cite{PieterDHR} for the quantum double models, we can understand braiding here in the sense of parallel transport, close to the original description of Leinaas and Myrheim, namely by starting with the two-particle representation $(\caH_0,\pi_0\circ\Gamma_{\overline\gamma}^g\circ\Gamma_\gamma^\chi,\Omega_0)$, and continuously braiding the electric charge $\chi$ around the flux $g$. For this, let $\lambda$ be a closed string of length $N$ such that $\partial_0\lambda = \partial_1\lambda = \partial_0\gamma$. Let $\Psi_\lambda(t)\in\caH_{0}$ be defined by $\Psi_\lambda(0) = \Omega_0$ and $\Psi_\lambda(t) = F^\chi_\lambda(t) \Omega_0$, where $F^\chi_\lambda(t)$ is given by
\begin{equation}
F^\chi_\lambda(t) = (\pi_0\circ\Gamma_{\overline\gamma}^g\circ\Gamma_\gamma^\chi) \left(\prod_{j=0}^{N} \Big(\idtyty + f(t - j)(\mathcal{F}^{\chi}_{(v_{N-j},e_{N-j})} - \idtyty)\Big)\right),\qquad(t>0).
\end{equation}
Here, $f$ is a smooth switch function such that $f(t) = 0$ if $t\leq0$ and $f(t) = 1$ if $t\geq1$. At integer time steps $t=k$ with $1\leq k\leq N$, we have that 
\begin{equation*}
\Psi_\lambda(k) = (\pi_0\circ\Gamma_{\overline\gamma}^g\circ\Gamma_\gamma^\chi) \left(\prod_{j=0}^{k-1} \mathcal{F}^{\chi}_{(v_{N-j},e_{N-j})} \right)\Omega_0,
\end{equation*}
and $\Psi_\lambda(t) = (\pi_0\circ\Gamma_{\overline\gamma}^g\circ\Gamma_\gamma^\chi) (\mathcal{F}^{\chi}_{\lambda}) \Omega_0$ for all $t\geq N$. It is now immediate to compute the holonomy associated with $\lambda$ and note that it is quantized.
\begin{thm}
Let $\eta_\lambda = \langle \Omega_0|\Psi_\lambda(N)\rangle$. Then
\begin{equation*}
\eta_\lambda = \chi(g)^{w_{\partial_0\overline\gamma}(\lambda)},
\end{equation*}
where $w_{\partial_0\overline\gamma}(\lambda)$ is the winding number of the string $\lambda$ around the face $\partial_0\overline\gamma$.
\end{thm}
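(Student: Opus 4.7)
The strategy is a direct algebraic manipulation that reduces $\eta_\lambda$ to a phase via the crossing identity~(\ref{strings crossing}) and the triviality of closed-loop string operators on the vacuum. Writing out the definition of $\Psi_\lambda(N)$,
\begin{equation*}
\eta_\lambda = \omega_0\bigl(\Gamma_{\overline\gamma}^g\circ\Gamma_\gamma^\chi(\caF_\lambda^\chi)\bigr).
\end{equation*}
First, I would show that $\Gamma_\gamma^\chi(\caF_\lambda^\chi)=\caF_\lambda^\chi$. Since $\lambda$ is a finite closed primal loop, one may truncate $\gamma$ to a segment $\gamma^{(N')}$ large enough that $\Gamma_\gamma^\chi$ acts on $\caF_\lambda^\chi$ as conjugation by the local operator $\caF_{\gamma^{(N')}}^\chi$; both are products of the mutually commuting diagonal operators $\caT^\chi$ on edges, so the conjugation is trivial.

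Next, truncating $\overline\gamma$ to a finite dual segment containing every edge it shares with $\lambda$, the crossing relation~(\ref{strings crossing}) gives
\begin{equation*}
\caF_\lambda^\chi\, \caF_{\overline\gamma}^g = \chi(g)^{c(\lambda,\overline\gamma)}\, \caF_{\overline\gamma}^g\, \caF_\lambda^\chi,
\end{equation*}
so that $\Gamma_{\overline\gamma}^g(\caF_\lambda^\chi)=\chi(g)^{c(\lambda,\overline\gamma)}\,\caF_\lambda^\chi$. At this stage $\eta_\lambda = \chi(g)^{c(\lambda,\overline\gamma)}\,\omega_0(\caF_\lambda^\chi)$.

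The remaining factor satisfies $\omega_0(\caF_\lambda^\chi)=1$. To see this, I would write $\lambda$ as the oriented boundary of the finite region $R$ of faces it encloses. Since $G$ is abelian, every interior edge appears in the product $\prod_{f\in R}\caF_{\gamma(f)}^\chi$ with mutually inverse characters on the two traversals, giving a factor $T^\chi T^{\overline\chi}=\idtyty$ on that edge, so $\caF_\lambda^\chi=\prod_{f\in R}\caF_{\gamma(f)}^\chi$. Inverting the defining relation of $\face^h_f$ yields $\caF_{\gamma(f)}^\chi = \sum_{h\in G}\overline\chi(h)\,\face^h_f$, and frustration-freeness~(\ref{FF condition}) (equivalently $\face^1_f\Omega_0=\Omega_0$, hence $\face^h_f\Omega_0=0$ for $h\neq 1$) gives $\omega_0\bigl(\prod_{f\in R}\face^{h_f}_f\bigr) = \prod_{f\in R}\delta_{h_f,1}$; only the term with $h_f=1$ for every $f$ survives in the multi-sum, and it contributes~$1$.

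It remains to identify $c(\lambda,\overline\gamma)$ with $w_{\partial_0\overline\gamma}(\lambda)$. By the gauge convention of Figure~\ref{fig:gauge choice}, $\overline\gamma$ is the vertical dual ray emanating upward from $\partial_0\overline\gamma$ to infinity, while $\lambda$ is a finite planar closed curve. It is a standard topological fact that the signed intersection number of a planar closed curve with any ray from a point $p$ to infinity equals the winding number of the curve about $p$; the right-to-left/left-to-right sign convention fixing $c$ in~(\ref{strings crossing}) is precisely the one producing a counterclockwise-positive winding. The main obstacle is really just this bookkeeping step of matching orientation conventions between crossings and windings; the rest is a short local computation enabled by truncating the infinite strings against the finite loop $\lambda$.
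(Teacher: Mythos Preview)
Your proof is correct and follows essentially the same route as the paper's: reduce to $\omega_0$ applied to the conjugated loop operator, use commutativity of $\caT$-type operators to drop $\Gamma_\gamma^\chi$, apply the crossing relation~(\ref{strings crossing}) to extract the phase, and then use that closed-loop string operators act trivially on the vacuum. The only difference is that you spell out in more detail the two points the paper leaves implicit---the face decomposition yielding $\omega_0(\caF_\lambda^\chi)=1$ (where the paper just invokes~(\ref{closed loops are trivial})) and the identification of the signed crossing number with the winding number.
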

\begin{proof}
The commutation relation~(\ref{strings crossing}) implies that
\begin{equation*}
(\Gamma_{\overline\gamma}^g\circ\Gamma_\gamma^\chi)(\mathcal{F}^{\chi}_{\lambda}) 
= \Gamma_{\overline\gamma}^g(\mathcal{F}^{\chi}_{\lambda}) = \chi(g)^{w_{\partial_0\overline\gamma}(\lambda)}\mathcal{F}^{\chi}_{\lambda}
\end{equation*}
which yields immediately $\eta_\lambda = \chi(g)^{w_{\partial_0\overline\gamma}(\lambda)}\langle \Omega_0,\pi_0(\mathcal{F}^{\chi}_{\lambda})\Omega_0\rangle = \chi(g)^{w_{\partial_0\overline\gamma}(\lambda)}$ since $\lambda$ is a closed string, see~(\ref{closed loops are trivial}). 
\end{proof}

%%%%%%%%%
\subsection{Symmetries} \label{sec:symmetries}

%%%%%%%%%
\subsubsection{Charge conservation}\label{sub:charge conservation}

For any $\chi\in\hat G$, let $\delta^\chi$ be the *-derivation on $\cstar$ formally given by $\iu\sum_{v}[\caA^\chi_v,\cdot]$. Similarly, for any $g\in G$, let $\delta^g$ be the *-derivation given by $\iu\sum_{f}[\caB_f^g,\cdot]$. Let $\alpha^\chi_s$, respectively $\beta^g_s$ be the corresponding groups of automorphisms. A slight extension of the proof of Proposition~\ref{prop:continuity equation} yields the following proposition:
\begin{prop}\label{prop:gauge invariance}
For all $\chi\in\hat G$ and all $g\in G$,
\begin{equation*}
\alpha^\chi_s \circ\tau_t = \tau_{t}\circ\alpha^\chi_s,\qquad 
\beta^g_s\circ\tau_t = \tau_{t}\circ\beta^g_s,
\end{equation*}
for all $s,t\in\bbR$.
\end{prop}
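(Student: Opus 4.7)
The plan is to reduce the commutation of the flows to commutation of their generators on a dense core, and then verify the latter for each term of the Hamiltonian. Since $\cstar_{\mathrm{loc}}$ is a *-subalgebra on which both $\delta$ and $\delta^\chi$ act by finite sums of commutators, it suffices to establish $[\delta^\chi, \delta](A)=0$ and $[\delta^g, \delta](A)=0$ for $A \in \cstar_{\mathrm{loc}}$; the commutation of the one-parameter groups $\alpha^\chi_s$ and $\tau_t$ then follows by a standard approximation argument, since both are strong limits of finite-volume dynamics generated by bounded Hamiltonians and $\cstar_{\mathrm{loc}}$ is norm-dense in $\cstar$. Moreover, the completeness relations $\sum_\chi \caA^\chi_v = \idtyty$ and $\sum_g \caB^g_f = \idtyty$ express $\delta^\iota$ (resp.\ $\delta^1$) on local operators as a finite linear combination of the $\delta^\chi$ for $\chi\neq\iota$ (resp.\ $\delta^g$ for $g\neq 1$), so it suffices to treat the non-trivial characters and group elements.

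I would then decompose $\delta = \delta^0 + \delta^\epsilon + \delta^\mu + \delta^{\epsilon\mu}$ and handle each summand in turn. For $\delta^0$, the commutator $[\caA^\chi_v, \caA^\iota_{v'}]$ vanishes either by (\ref{AB projections}) when $v=v'$ or by commutativity of the underlying $\caL$-operators when $v\neq v'$, and $[\caA^\chi_v, \caB^1_f]=0$ is the standard QDM vertex-face commutation. For $\delta^\epsilon$, Proposition~\ref{prop:continuity equation}(ii) furnishes $\sum_v [T_e^\epsilon, \caA^\chi_v] = 0$ for every edge, which combined with the Jacobi identity inside the finite sums defining $\delta^\epsilon(A)$ and $\delta^\chi(A)$ yields $[\delta^\chi, \delta^\epsilon](A)=0$. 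For $\delta^\mu$, each constituent of $T_e^\mu$---the $\caL$-operator $\caF^{\bar h}_{(f,e)}$ and the face projections $\caB^h_f,\caB^h_{f'}$---commutes with the $\caL$-polynomial $\caA^\chi_v$ (the former by commutativity of $\caL$'s, the latter by vertex-face commutation), so $[\caA^\chi_v, T_e^\mu]=0$ termwise.

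The step I expect to be most delicate is $\delta^{\epsilon\mu}$, because the pair-hopping (\ref{em hopping}) couples $\caT$- and $\caL$-operators on the same edge, which do not commute individually. The structural observation that makes it work is that each of the two summands in (\ref{em hopping}) factors as a simple hopping times a sum of projections of the opposite type. In the first summand, $\caA^\chi_v$ commutes with $(\caB_f^h+\caB_{f'}^h)$, so the remaining commutator reduces to the charge-hopping commutator that vanishes upon summing over the two endpoints of $e$ by exactly the mechanism of Proposition~\ref{prop:continuity equation}(ii). In the second summand, $\caA^\chi_v$ commutes termwise with the flux-hopping factor (built from the edge $\caL$-operator $\caF^{\bar h}_{(f,e)}$ and from face projections, all commuting with $\caA^\chi_v$) and with $(\caA_v^{\chi'}+\caA_{v'}^{\chi'})$ (both being $\caL$-polynomials), so the commutator vanishes without requiring a sum.

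The statement for $\beta^g_s$ is strictly analogous under the electric-magnetic duality built into the model, which exchanges $\caL$ with $\caT$, $\caA$ with $\caB$, and vertices with faces: $[\caB^g_f, T_e^\epsilon]=0$ holds termwise, $\sum_f[\caB^g_f, T_e^\mu]=0$ is the dual of Proposition~\ref{prop:continuity equation}(ii), and the pair-hopping analysis is repeated with the roles of the two summands of (\ref{em hopping}) exchanged.
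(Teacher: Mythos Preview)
Your argument is correct and is precisely the ``slight extension of the proof of Proposition~\ref{prop:continuity equation}'' that the paper invokes in lieu of a written proof. You have made explicit the term-by-term verification---in particular the handling of $\delta^{\epsilon\mu}$ by factoring each summand of~(\ref{em hopping}) and reducing to the mechanism of Proposition~\ref{prop:continuity equation}(ii)---that the paper leaves to the reader.
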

%

%%%%%%%%%
\subsubsection{Charge conjugations}

There are two types of conjugations in these models, corresponding to a flux-anti flux exchange or a charge-anti charge exchange. We first define an antilinear transformation $\theta^\mu$ on $\caH$ by
\begin{equation*}
\theta^\mu\vert g\rangle = \vert\bar g\rangle,
\end{equation*}
and extension by antilinearity, for which $(\theta^\mu)^2 = \idtyty$. It is immediate to check that
\begin{equation*}
\theta^\mu L^g = L^{\bar g} \theta^\mu,\qquad \theta^\mu T^\chi = T^\chi \theta^\mu,
\end{equation*}
for all $g\in G,\chi\in\hat G$. The automorphism
\begin{equation*}
\Theta^\mu(\cdot) = \lim_{\Lambda\to\bbZ^2}\Theta^\mu_\Lambda(\cdot)\Theta^\mu_\Lambda,
\end{equation*}
where $\Theta^\mu_\Lambda = \otimes_{e\in\caE_\Lambda}\theta_e$, acts as
\begin{equation*}
\Theta^\mu(\caA_v^\chi) = \caA_v^{\chi},\qquad \Theta^\mu(\caB^h) = \caB^{\bar h}.
\end{equation*}

A second antiunitary on $\caH$ is given by complex conjugation, namely
\begin{equation*}
\theta^\epsilon\vert g\rangle = \vert g\rangle,
\end{equation*}
and extension by antilinearity. It again squares to the identity but
\begin{equation*}
\theta^\epsilon L^g = L^{g} \theta^\epsilon,\qquad \theta^\epsilon T^\chi = T^{\overline \chi} \theta^\epsilon,
\end{equation*}
for all $g\in G,\chi\in\hat G$. It follows that
\begin{equation*}
\Theta^\epsilon( \caA_v^\chi ) = \caA_v^{\overline \chi},\qquad \Theta^\epsilon( \caB^h ) = \caB^h,
\end{equation*}
where $\Theta^\epsilon$ is the automorphism of $\cstar$ associated with the local $\theta^\epsilon$.

\begin{prop}\label{prop:discrete symmetries}
For all $\chi\in\hat G$ and all $g\in G$,
\begin{equation*}
\Theta^\sharp \circ\tau_t = \tau_{-t}\circ \Theta^\sharp,
\end{equation*}
where $\sharp\in\{\epsilon,\mu\}$.
\end{prop}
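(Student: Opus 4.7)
The key observation is that time reversal under an \emph{antilinear} automorphism $\Theta$ follows whenever $\Theta$ preserves the Hamiltonian: if $\Theta(H)=H$ and $\Theta^2=\mathrm{id}$, then for any local $A$,
\begin{equation*}
\Theta(\iu[H,A]) = -\iu[\Theta(H),\Theta(A)] = -\iu[H,\Theta(A)],
\end{equation*}
i.e.\ $\Theta\circ\delta = -\delta\circ\Theta$ on $\cstar_{\mathrm{loc}}$, which by a standard exponentiation argument on finite-volume approximants gives $\Theta\circ\tau_t = \tau_{-t}\circ\Theta$. So my plan is to show that each of $H^0,H^\epsilon,H^\mu,H^{\epsilon\mu}$ is invariant under both $\Theta^\epsilon$ and $\Theta^\mu$, and then deduce the identity for the generated dynamics by passing to the infinite-volume limit along a net of finite $\Lambda$'s, using that $\Theta^\sharp_\Lambda$ is an antiunitary involution commuting with $\Theta^\sharp$ in norm on local elements.

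For the first step I would tabulate the action of $\Theta^\sharp$ on the building blocks. From the given intertwining relations $\theta^\mu L^g = L^{\bar g}\theta^\mu$, $\theta^\mu T^\chi = T^\chi\theta^\mu$, $\theta^\epsilon L^g = L^g\theta^\epsilon$, $\theta^\epsilon T^\chi = T^{\bar\chi}\theta^\epsilon$, one gets on the elementary string operators
\begin{equation*}
\Theta^\mu(\caF^\chi_{(v,e)}) = \caF^\chi_{(v,e)},\quad \Theta^\mu(\caF^g_{(f,e)}) = \caF^{\bar g}_{(f,e)},\quad
\Theta^\epsilon(\caF^\chi_{(v,e)}) = \caF^{\bar\chi}_{(v,e)},\quad \Theta^\epsilon(\caF^g_{(f,e)}) = \caF^{g}_{(f,e)},
\end{equation*}
together with the already given actions $\Theta^\mu(\caA^\chi_v)=\caA^\chi_v$, $\Theta^\mu(\caB^h_f)=\caB^{\bar h}_f$, $\Theta^\epsilon(\caA^\chi_v)=\caA^{\bar\chi}_v$, $\Theta^\epsilon(\caB^h_f)=\caB^h_f$. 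Invariance of $H^0$ is then immediate because $\caA^\iota_v$ and $\caB^1_f$ are fixed points (as $\bar\iota=\iota$ and $\bar 1=1$).

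For the hopping terms the invariance is a matter of a substitution in the summation variable. For instance $\Theta^\epsilon(T_e^\epsilon)$ sends the $\chi$-summand to $\caF^{\chi}_{(v,e)}\caA^\iota_{v'}\caA^{\bar\chi}_v + \caF^{\chi}_{(v',e)}\caA^\iota_v\caA^{\bar\chi}_{v'}$, which after relabeling $\chi\leadsto\bar\chi$ in the sum over $\hat G\setminus\{\iota\}$ reproduces $T_e^\epsilon$; analogously $\Theta^\mu(T_e^\mu)$ is recovered after $h\leadsto\bar h$. The symmetries mix the two factors of $T_e^{\epsilon\mu}$ within each summand but each factor is separately invariant by the same substitution argument, so $\Theta^\sharp(T^{\epsilon\mu}_e)=T^{\epsilon\mu}_e$. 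Thus $\Theta^\sharp(H^0_\Lambda+\lambda_\epsilon H^\epsilon_\Lambda+\lambda_\mu H^\mu_\Lambda+\lambda_{\epsilon\mu}H^{\epsilon\mu}_\Lambda) = H_\Lambda$.

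Finally, since $\Theta^\sharp_\Lambda$ is antiunitary on the finite-volume Hilbert space, the spectral calculus yields $\Theta^\sharp_\Lambda\, e^{\iu t H_\Lambda}\,\Theta^\sharp_\Lambda = e^{-\iu t H_\Lambda}$, hence $\Theta^\sharp\circ\tau^\Lambda_t = \tau^\Lambda_{-t}\circ\Theta^\sharp$ on $\cstar_\Lambda$. Taking $\Lambda\nearrow\bbZ^2$ and using norm-convergence of the local dynamics together with the fact that $\Theta^\sharp$ is norm-continuous (being a $*$-automorphism up to antilinearity, hence isometric on $\cstar$) gives the identity $\Theta^\sharp\circ\tau_t=\tau_{-t}\circ\Theta^\sharp$ on $\cstar_{\mathrm{loc}}$, and then on all of $\cstar$ by density. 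The only mild subtlety — and the point I would make sure to track carefully — is precisely the sign flip produced by the antilinearity of $\Theta^\sharp$ when it commutes past the $\iu$ in the Heisenberg evolution; once that is acknowledged, the proof reduces to the bookkeeping above.
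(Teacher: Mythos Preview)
Your argument is correct and is exactly the natural one; indeed the paper does not spell out a proof of this proposition at all, presumably regarding it as routine. Your bookkeeping of the action of $\Theta^\sharp$ on the elementary operators $L^g,T^\chi$ (and hence on $\caF^\sharp$, $\caA^\chi_v$, $\caB^h_f$) is accurate, the relabeling $\chi\leadsto\bar\chi$ or $h\leadsto\bar h$ in each sum does recover the hopping terms, and the antilinearity of $\Theta^\sharp$ is precisely what flips the sign of $\iu$ in the derivation, yielding $\Theta^\sharp\circ\delta=-\delta\circ\Theta^\sharp$ and hence $\Theta^\sharp\circ\tau_t=\tau_{-t}\circ\Theta^\sharp$ upon exponentiation and passage to the thermodynamic limit.
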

%

%%%%%%%%%
\subsubsection{Duality}

The Abelian quantum double models exhibit a duality symmetry that corresponds to the exchange of fluxes with charges. Using the Fourier transform associated with the finite Abelian group $G$, and the fact that $G$ and $\hat G$ are isomorphic as 
Abelian groups, we can define a duality transformation $U$ on the edge Hilbert space $\caH$. For clarity we recall the standard definitions. 
Since $\caH\cong \ell^2(G)\cong \ell^2(\hat G)$, and using the orthogonality relation for characters, we can introduce an orthogonal 
basis $\{\ket{\chi}\mid \chi\in\hat G\}$, labelled by the characters, by requiring $\langle\chi|\xi\rangle=\overline{\chi}(g)$. The conventional normalization 
then gives:
$$
\langle\chi|\xi\rangle=|G| \delta_{\chi,\xi}, \quad \chi,\xi\in\hat G.
$$
The two bases $\{\ket{g}: g\in G\}$ and $\{\ket{\chi}: \chi\in \hat G\}$ give rise to a unitary transformation $U$ on $\caH=\ell^2(G)$ such that 
$$
\hat v = Uv, \mbox{ with } \hat v(\chi) = \vert G\vert^{-1/2}\sum_{g\in G} \overline{\chi}(g) v(g).
$$
and with the inverse given by
$$
v(g) = \vert G\vert^{-1/2} \sum_{\chi\in\hat G} \chi(g) \hat v(\chi).
$$
As operators on $\ell^2(G)$, $L^h$ and $T^\chi$ take the form
$$
(L^h v)(g) = v(\bar h g), \quad (T^\chi v)(g) = \overline{\chi}(g)v(g).
$$
The Fourier transform interchanges their actions on $\ell^2(\hat G)$ (up to a complex conjugate)
$$
U L^h U^* \hat v (\xi) = {\xi}(h) \hat v (\xi), \quad U T^\chi U^* \hat v (\xi) = \hat v (\chi\xi).
$$
It follows that the transformed Hamiltonian $(\otimes U) H (\otimes U^*)$ is identical to $H$ upon the replacement of $G$ by $\hat G$ and the lattice by its dual lattice, and after interchanging $\lambda_\epsilon$ 
and $\lambda_\mu$.
 
Note that in the cyclic case $G=\bbZ_N$, with the characters labelled by $m=0,\ldots,N-1$, $\chi_m: j\mapsto \omega^{mj}$, with $\omega=e^{\frac{2\pi i}{N}}$, $U$ is the usual finite Fourier transform:
\begin{equation*}
U\vert j\rangle = \frac{1}{\sqrt N}\sum_{k=0}^{N-1}\omega^{jk}\vert k\rangle.
\end{equation*}
%

%%%%%%%%%
\subsection{Dynamics in Hilbert space}\label{sec:Dynamics}

The vacuum state remains invariant under the full dynamics $\tau_t$ since $\omega_0\circ\delta = \omega_0$. It follows that
$\tau_t$ is unitarily implementable in the vacuum representation, namely there is a strongly continuous unitary group $t\mapsto U_{0,t}$ on $\caH_0$ such that
\begin{equation}\label{U Implementability in vacuum}
U_{0,t}\pi_0(A)(U_{0,t})\str = \pi_0(\tau_t(A)),\qquad U_{0,t}\Omega_0 = \Omega_0.
\end{equation}
Let $K_0$ be its self-adjoint generator, normalized so that $K_0\Omega_0=0$:
\begin{equation}\label{Implementability in vacuum}
\iu[K_0,\pi_0(A)] = \pi_0(\delta(A)).
\end{equation}

While the charged states were also invariant under $\delta^0$, this is not the case anymore in the presence of the hopping terms. However, the dynamics is still unitarily implementable in the charged representations. We start with the charged state having one electric charge $\chi$. 

\begin{thm}
The dynamics $\tau_t$ is unitarily implementable in the GNS representation of $\omega_v^\chi$: There is a strongly continuous unitary group $t\mapsto U_{\gamma,t}^\chi$ on $\caH_0$ such that
\begin{equation*}
U_{\gamma,t}^\chi(\pi_0\circ\Gamma_\gamma^\chi)(A)(U_{\gamma,t}^\chi)\str = (\pi_0\circ\Gamma_\gamma^\chi)(\tau_t(A)),
\end{equation*}
for all $A\in\cstar$.
\end{thm}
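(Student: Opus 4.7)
The plan is to construct $U_{\gamma,t}^\chi$ as a strong limit of the unitaries associated with finite-string truncations, for which the implementation is trivially inner. For each $N$, let $\gamma^{(N)}$ denote the truncation of $\gamma$ to its first $N$ edges. Since $\caF_{\gamma^{(N)}}^\chi \in \cstar$ is a unitary, $\Gamma_{\gamma^{(N)}}^\chi$ is an inner automorphism and the representation $\pi_0 \circ \Gamma_{\gamma^{(N)}}^\chi$ is unitarily equivalent to $\pi_0$ via $\pi_0(\caF_{\gamma^{(N)}}^\chi)$. Using~\eqref{U Implementability in vacuum}, the group
\begin{equation*}
U_{\gamma^{(N)},t}^\chi := \pi_0(\caF_{\gamma^{(N)}}^\chi)\str\, U_{0,t}\, \pi_0(\caF_{\gamma^{(N)}}^\chi) \;=\; \pi_0\big(W_t^{(N)}\big)\, U_{0,t},\qquad W_t^{(N)} := \caF_{\gamma^{(N)}}^{\bar\chi}\,\tau_t(\caF_{\gamma^{(N)}}^\chi),
\end{equation*}
is a strongly continuous unitary group on $\caH_0$ that implements $\tau_t$ in $\pi_0 \circ \Gamma_{\gamma^{(N)}}^\chi$, as a short direct calculation confirms. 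The task thus reduces to establishing strong convergence of $\pi_0(W_t^{(N)})$ on $\caH_0$ as $N\to\infty$, locally uniformly in $t$.

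To that end, the sequence $W_t^{(N)}$ satisfies the telescoping identity
\begin{equation*}
W_t^{(N+1)} \;=\; \caT^{\bar\chi}(v_{N+1},e_{N+1})\, W_t^{(N)}\, \tau_t(\caT^\chi(v_{N+1},e_{N+1})),
\end{equation*}
exhibiting it as a unitary cocycle along the string. For any $B\in\cstar_{\mathrm{loc}}$ of fixed support, the goal is to show that $\pi_0(W_t^{(N)})\pi_0(B)\Omega_0$ is Cauchy. For $N$ large, the added edge $(v_{N+1},e_{N+1})$ lies far from $\supp(B)$; the Lieb--Robinson bounds for the short-range dynamics $\tau_t$ then confine the effective support of $\tau_t(\caT^\chi(v_{N+1},e_{N+1}))$ to a neighborhood of radius $\propto |t|$ of $(v_{N+1},e_{N+1})$, still disjoint from $\supp(B)$ for $N$ sufficiently large. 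Combining this with the clustering of $\omega_0$ and the vanishing single-edge expectation $\omega_0(\caT^\chi) = 0$ for $\chi\neq\iota$, one should obtain summable-in-$N$ estimates of the form $\|\pi_0(W_t^{(N+1)} - W_t^{(N)})\pi_0(B)\Omega_0\| \to 0$. Since $\cstar_{\mathrm{loc}}\Omega_0$ is dense in $\caH_0$ and the $\pi_0(W_t^{(N)})$ are uniformly bounded, this yields the required strong convergence, and the limit $U_{\gamma,t}^\chi := \lim_N \pi_0(W_t^{(N)})U_{0,t}$ inherits unitarity, the group law, and strong continuity in $t$.

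Passing the implementation identity to the limit is then routine: for any $A\in\cstar_{\mathrm{loc}}$ and $N$ large enough that no further edge of $\gamma$ meets $\supp(A)$, one has $\Gamma_{\gamma^{(N)}}^\chi(A) = \Gamma_\gamma^\chi(A)$, and by the norm-quasilocality of $\tau_t(A)$ also $(\pi_0\circ\Gamma_{\gamma^{(N)}}^\chi)(\tau_t(A))$ converges in norm to $(\pi_0\circ\Gamma_\gamma^\chi)(\tau_t(A))$. The identity for $U_{\gamma^{(N)},t}^\chi$ then passes to the limit for local $A$, and density extends it to all of $\cstar$. The main obstacle is the strong convergence argument: the individual string vectors $\pi_0(\caF_{\gamma^{(N)}}^\chi)\Omega_0$ do \emph{not} converge --- indeed $\omega_0(\caT^\chi) = 0$ gives $\|\pi_0(\caF_{\gamma^{(N+1)}}^\chi - \caF_{\gamma^{(N)}}^\chi)\Omega_0\|^2 = 2$, reflecting the inequivalence of the charged and vacuum superselection sectors --- so convergence of $W_t^{(N)}$ hinges on a delicate cancellation between the two opposing string operators that preserves the charge content at spatial infinity under the dynamics, and quantifying this cancellation via Lieb--Robinson estimates is the heart of the proof.
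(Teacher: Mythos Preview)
Your approach is genuinely different from the paper's, and the gap you yourself flag at the end is real.

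The paper does not attempt to take a limit of inner cocycles at all. Instead it works at the level of derivations: since $\Gamma_\gamma^\chi$ acts locally on interaction terms, one has $\Gamma_\gamma^\chi\circ\delta = \delta^{\Gamma_\gamma^\chi(H)}\circ\Gamma_\gamma^\chi$, and the transformed interaction $\Gamma_\gamma^\chi(H)-H$ is computed explicitly. The key model-specific observation is that every term in this difference that lies along the infinite string contains a non-trivial projector $\caA_v^\xi$ ($\xi\neq\iota$) or $\caB_f^h$ ($h\neq 1$) as a factor, and therefore leaves $\omega_0$ invariant by~\eqref{Invariance of vacuum}. The only terms that do \emph{not} preserve $\omega_0$ are supported in a finite neighbourhood of $\partial_0\gamma$ and assemble into a strictly local self-adjoint $P^\chi_{\partial_0\gamma}\in\cstar_{\mathrm{loc}}$. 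Thus $\delta^{\Gamma_\gamma^\chi(H)} = \delta_{\gamma,0}^\chi + \iu[P^\chi_{\partial_0\gamma},\cdot]$ with $\omega_0\circ\delta_{\gamma,0}^\chi=0$; the first summand is then implementable in $(\caH_0,\pi_0,\Omega_0)$ by the standard GNS argument, and adding the bounded perturbation $\pi_0(P^\chi_{\partial_0\gamma})$ is trivial. This yields the generator $K_\gamma^\chi$ directly, with an explicit formula that is essential for the spectral analysis in the rest of the paper.

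Your cocycle-limit route, by contrast, has to confront the far endpoint $v_{N+1}$ of the truncated string, which carries a dynamical $\bar\chi$ charge under $\tau_t$. The heuristic ``Lieb--Robinson plus clustering plus $\omega_0(\caT^\chi)=0$'' does not establish that this contribution stabilises on vectors $\pi_0(B)\Omega_0$: in your telescoping step the operators $\caT^{\bar\chi}_{N+1}$, $W_t^{(N)}$ and $\tau_t(\caT^\chi_{N+1})$ all have support overlapping near $e_{N+1}$ (the far end of $W_t^{(N)}$ sits at $e_N$, adjacent to $e_{N+1}$, and the LR-fattening of both reaches into the other), so you cannot simply commute and factorise. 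Note also that $\delta(\caF_{\gamma^{(N)}}^\chi)$ is \emph{not} supported only near the endpoints: the $H^\mu$ and $H^{\epsilon\mu}$ terms produce contributions along the entire string (phases from $\caT$--$\caL$ non-commutation). What makes all of these harmless on $\Omega_0$ is exactly the projector structure the paper isolates; without invoking it, the summable-in-$N$ estimate you need is not available from general locality/clustering alone. If you do invoke it, you are essentially reproducing the paper's decomposition inside a more cumbersome framework, and you still have to show that the far-endpoint contribution to $W_t^{(N)}$ acts on $\Omega_0$ as an $N$-independent phase (which is true for $H^0$ alone but requires work once hopping is present).
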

\begin{proof}
Since the automorphism $\Gamma^\chi_\gamma$ is strictly local, in the sense that $\Gamma^\chi_\gamma(\caA_\Lambda)\subset\caA_\Lambda$ for any $\Lambda\subset\caE$, its action on any Hamiltonian is naturally  defined by letting $\Gamma^\chi_\gamma$ act on every interaction term. Denoting the resulting Hamiltonian $\Gamma^\chi_\gamma(H)$ and the corresponding derivation $\delta^{\Gamma^\chi_\gamma(H)}$, we have that
\begin{equation*}
\Gamma^\chi_\gamma\circ\delta = \delta^{\Gamma^\chi_\gamma(H)}\circ\Gamma^\chi_\gamma.
\end{equation*}
A short calculation yields that
\begin{equation}\label{change of static}
\Gamma_\gamma^\chi(H^0) = H^0 + (\caA^\iota_{\partial_0\gamma} - \caA^{\overline\chi}_{\partial_0\gamma})
\end{equation}
and
\begin{equation}\label{change of epsilon hopping}
\Gamma_\gamma^\chi(H^\epsilon) = H^\epsilon + \sum_{\substack{e\in\caE: \\ \partial e = \{\partial_0\gamma,v\}}}
\sum_{\xi\in\hat G,\xi\neq\iota}
\Big(\caF^{\overline\xi}_{(\partial_0\gamma,e)}\caA^{\iota}_{v}(\caA^{\overline\chi\xi}_{\partial_0\gamma} - \caA^{\xi}_{\partial_0\gamma})
+ \caF^{\overline \xi}_{(v,e)}(\caA^{\overline\chi}_{\partial_0\gamma} - \caA^{\iota}_{\partial_0\gamma})\caA^{\xi}_{v}\Big).
\end{equation}
The action of $\Gamma_\gamma^\chi$ on $H^\mu$ is trivial, up to the phase picked by all hopping terms across $\gamma$, see~(\ref{strings crossing}), resulting in
\begin{equation}\label{change of mu hopping}
\Gamma_\gamma^\chi(H^\mu) = H^\mu + \sum_{e\in\gamma}\sum_{h\in G,h\neq 1} 
\left(
(\chi(h)^{c(\gamma,(f,e))}-1) \mathcal{F}^{\overline{h}}_{(f,e)}\face^{1}_{f'} \face^{h}_{f} + \left(\chi(h)^{c(\gamma,(f',e))}-1\right)\mathcal{F}^{\bar h}_{(f',e)} \face^{1}_{f} \face^{h}_{f'}
\right).
\end{equation}
Finally, we consider the action of $\Gamma_\gamma^\chi$ on $H^{\epsilon\mu}$. The charge hopping corresponding to the first line of~(\ref{em hopping}) is affected as in~(\ref{change of epsilon hopping}) since $\caB$ operators are left invariant. The flux hopping corresponding to the second line of~(\ref{em hopping}) if affected in two ways: Firstly as in~(\ref{change of mu hopping}), the hopping terms across the string $\gamma$ pick up a phase, and secondly the projectors $\caA_{\partial_0\gamma}^\xi$ are mapped to $\caA_{\partial_0\gamma}^{\xi\overline\chi}$. 

By~(\ref{Invariance of vacuum}), all terms that contain an $\caA^\chi$ for $\chi\neq\iota$ or a $\caB^g$ for $g\neq 1$ leave the vacuum state invariant. The same holds for $H^0$, and we gather the action of all of them under the notation $\delta^\chi_{\gamma,0}$. The only terms that have a non-trivial action on $\omega_0$ are supported in a finite neighbourhood of $\partial_0\gamma$ and given explicitly by
\begin{equation*}
P^\chi_{\partial_0\gamma} = \caA^\iota_{\partial_0\gamma}
+ \sum_{\substack{e\in\caE: \\ \partial e = \{\partial_0\gamma,v\}}}
\caF^{\overline\chi}_{(\partial_0\gamma,e)}\caA^{\iota}_{v}\caA^{\iota}_{\partial_0\gamma}\in\cstar_{\mathrm{loc}}.
\end{equation*}
Summarizing, we have that
\begin{equation*}
\delta^{\Gamma^\chi_\gamma(H)} = \delta^\chi_{\gamma,0}+\iu[P^\chi_{\partial_0\gamma},\cdot],
\end{equation*}
and $\omega_0$ is invariant under the action of~$\delta^\chi_{\gamma,0}$. We conclude as in~(\ref{Implementability in vacuum}) that $\delta^\chi_{\gamma,0}$ is implemented by a self-adjoint operator $K_{\gamma,0}^\chi$ and so
\begin{equation}\label{chi GNS Hamiltonian}
\pi_0\circ\Gamma_\gamma^\chi\circ\delta(A) = \iu[K_{\gamma,0}^\chi + \pi_0(P^\chi_{\partial_0\gamma}),\pi_0\circ\Gamma_\gamma^\chi(A)]\qquad K_{\gamma,0}^\chi\Omega_0 = 0.
\end{equation}
Hence, $U_{\gamma,t}^\chi = \ep{\iu t K_\gamma^\chi}$ where $K_\gamma^\chi = K_{\gamma,0}^\chi + \pi_0(P^\chi_{\partial_0\gamma})$.
\end{proof}

The Hamiltonian $K^{\chi}_{\gamma}$ --- and its cousins to be described shortly --- will be the main object of the spectral analysis carried out in the next section. Before doing this, we recall that, in the representation $\pi_0\circ\Gamma_\gamma^\chi$, the vector $\Omega_0$ corresponds to the state having one electric charge $\chi$ at $\partial_0\gamma$. That this charge is dynamical is reflected in the fact that the propagator has a non-trivial action on $\Omega_0$ since 
\begin{equation}\label{Hopping in chi GNS}
K^{\chi}_{\gamma} \Omega_0 = \pi_0(P^\chi_{\partial_0\gamma}) \Omega_0
=\Omega_0 + \lambda_\epsilon \sum_{e:\partial_0\gamma\in\partial e}\pi_0\left(\caF^{\overline\chi}_{(\partial_0\gamma,e)}\right) \Omega_0,
\end{equation}
where we used that $\pi_0(\caA^{\iota}_{v})\Omega_0 = \Omega_0$ for all $v\in\caV$.

The results above extend with the obvious modifications to the case of a magnetic charge localized on a face. The case of both an electric charge $\chi\in\hat G$ at vertex $v$ and a magnetic flux $g\in G$ at face $f$ is similar but requires a little more attention. The GNS representation of the state $\omega^{\chi,g}_{v,f}$ is given by $(\pi_0\circ\Gamma^g_{\overline \gamma}\circ\Gamma_\gamma^\chi,\caH_0,\Omega_0)$, where $\partial_0\gamma = v$ and $\partial_0\overline\gamma = f$, as in Figure~\ref{fig:gauge choice}. We denote $K^{\chi,g}_{\gamma,\overline\gamma}$ the generator of the dynamics in the GNS representation. It is of the same form as~(\ref{chi GNS Hamiltonian}), namely
\begin{equation}\label{GNS Hamiltonian}
K^{\chi,g}_{\gamma,\overline\gamma} = K^{\chi,g}_{\gamma,\overline\gamma,0} + \pi_0\left(P^{\chi,g}_{v,f}\right),
\qquad K^{\chi,g}_{\gamma,\overline\gamma,0}\Omega_0 = 0,
\end{equation}
where $P^{\chi,g}_{v,f}$ is again strictly local. The action of $\Gamma_{\overline\gamma}^g$ on~(\ref{change of static}) is simple,
\begin{equation*}
(\Gamma^g_{\overline \gamma}\circ\Gamma^\chi_{\gamma})(H^0)-H^0 =
(\caA^\iota_{\partial_0\gamma} - \caA^{\overline\chi}_{\partial_0\gamma})
+(\caB_{\partial_0\overline\gamma}^1 - \caB_{\partial_0\overline\gamma}^g),
\end{equation*}
providing $(\caA^\iota_{\partial_0\gamma} + \caB_{\partial_0\overline\gamma}^1)$ to $P^{\chi,g}_{v,f}$. Secondly, $\Gamma_{\overline\gamma}^g$ adds phases to the hopping terms of~(\ref{change of epsilon hopping}) that cross $\overline\gamma$ with only 
\begin{equation}\label{eHops}
\sum_{\substack{e\in\caE: \\ e = \langle\partial_0\gamma,v\rangle}}
\chi(g)^{-c((\partial_0\gamma,e),\overline \gamma)}\caF^{\overline\chi}_{(\partial_0\gamma,e)}\caA^{\iota}_{v}\caA^{\iota}_{\partial_0\gamma}
\end{equation}
not leaving $\omega_0$ invariant. Furthermore, the only non-trivial action of $\Gamma_{\overline\gamma}^g$ on $\Gamma_{\gamma}^\chi(H^\mu)$ given in~(\ref{change of mu hopping}) is at the end of the dual string $\partial_0\overline\gamma$, which adds
\begin{equation}\label{muHops}
\sum_{\substack{e\in\caE: \\ e \in \partial\partial_0\gamma\cap\partial f}}\chi(g)^{c(\gamma,(\partial_0\overline\gamma,e))}\caF^{\overline g}_{(\partial_0\overline\gamma,e)}\face^{1}_{f} \face^{1}_{\partial_0\overline\gamma}.
\end{equation}
to $P^{\chi,g}_{v,f}$.
Finally, $(\Gamma^g_{\overline \gamma}\circ\Gamma^\chi_{\gamma})(H^{\epsilon\mu})$ acts that act non-trivially on $\omega_0$ if and only if $\partial_0\gamma, \partial_0\overline\gamma$ belong to the same site, namely $\partial_0\gamma \in\partial_0\overline\gamma$. In that case, there are four different possible hopping terms of either form above, as illustrated in Figure~\ref{fig:hoppinG phases}.

\begin{figure}[htb]
    \includegraphics[width=0.2\textwidth]{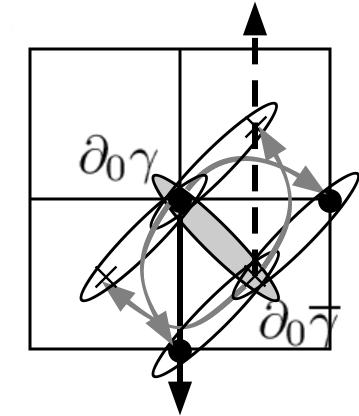}
    \caption{The four possible hopping terms arising from $H^{\epsilon\mu}$. A non-trivial phase is only associated with the two horizontal displacements.}
    \label{fig:hoppinG phases}
\end{figure}
%
%%
%\begin{equation*}
%\pi_0\left((\Gamma^g_{\overline \gamma}\circ\Gamma^\chi_{\gamma})(H^{\epsilon\mu})-H^{\epsilon\mu}\right)\Omega_0 = \delta_{\partial_0\gamma\sim \partial_0\overline\gamma}\lambda_{\epsilon\mu}\pi_0\left(\caF^{\overline\chi}_{(v,e)} + \caF^{\overline\chi}_{(v,e')}
%+ \caF^{\overline g}_{(f,e)}+ \caF^{\overline g}_{(f,e')}\right) \Omega_0,.
%\end{equation*}
%%
%where $\delta_{\partial_0\gamma\sim \partial_0\overline\gamma}$ equals $1$ if , and $0$ otherwise. 

As expected, the see that in a two-particle state, the $\chi$ particle exhibits nearest neighbour hopping on the set of vertices, the $g$ particle exhibits nearest neighbour hopping on the set of faces, and the hopping of the composite particle is to the neighbouring sites. The hopping of each type of single particle is influenced by the presence of the other type in its immediate vicinity.

%%%%%%%%%%%%%%%%%%%%%%%%%%%%%%%%%%%%%%%%%%%%%%%%%
%%%%%%%%%%%%%%%%%%%%%%%%%%%%%%%%%%%%%%%%%%%%%%%%%

\section{Spectral analysis in the one-particle sectors}   \label{sec: Spectrum}

%%%%%%%%%
\subsection{The Hamiltonian in one- and two-particle subspaces}

We introduced in Section~\ref{sub:charge conservation} the derivations corresponding to the number operators. In the GNS representations discussed above, the derivation is implemented by the commutator with a self-adjoint operator, analogous to the number operator on Fock space. Let us first recall the definition of~$\caN^\zeta_\Lambda$ in Proposition~\ref{prop:continuity equation}. We define
\begin{equation*}
\caN^{\hat G}_\Lambda = \sum_{\zeta\in\hat G} \caN^\zeta_\Lambda \in\cstar.
\end{equation*}
If $\partial_0\gamma\in\Lambda$, then
\begin{equation*}
\Gamma_\gamma^\chi\left(\caN^{\hat G}_\Lambda \right) = \caN^{\hat G}_\Lambda + \left(\caA^\iota_{\partial_0\gamma} - \caA^\chi_{\partial_0\gamma}\right).
\end{equation*}
In the vacuum representation, the limit
\begin{equation*}
N_0^{\hat G} = \wlim_{\Lambda\to\caV}\pi_0(\caN^{\hat G}_\Lambda)
\end{equation*}
defines an unbounded self-adjoint operator with a dense domain for which $N_0^{\hat G}\Omega_0 = 0$. It follows that the derivation $\lim_{\Lambda\to\caV}\iu[\caN_\Lambda^{\hat G},\cdot]$ is implemented in the GNS space of $\omega_{\partial_0\gamma}^\chi$ by
\begin{equation*}
N^{\hat G} = N_0^{\hat G} + \pi_0(A^\iota_{\partial_0\gamma} - A^\chi_{\partial_0\gamma}).
\end{equation*}
In particular, $N^{\hat G}\Omega_0 = \pi_0(A^\iota_{\partial_0\gamma})\Omega_0 = \Omega_0$. Had we not summed over $\hat G$, the same sequence of ideas would have given a generator $N^\zeta$ for any $\zeta\in\hat G$ with the property that $N^\zeta\Omega_0 = \delta_{\zeta,\chi}\Omega_0$.

Replacing $\caA^\chi$'s by $\caB^g$'s, we obtain a second number operator denoted $N^{ G}$ and such that $N^{ G}\Omega_0=0$. Finally,
\begin{equation*}
N = N^{\hat G} + N^{ G}
\end{equation*}
is the total number operator. 

Two charged states differing only by the position of the particle are unitarily equivalent and the intertwiner of representations is explicit. (\ref{algebraic transporters}) shows that if $T_{v\to v'} = \pi_0(\caF^{ \chi}_{\gamma(v'\to v)})$, then $T_{v\to v'}\str (\pi_0\circ\Gamma_{v}^\chi)(A) T_{v\to v'}$ provides a GNS representation of $\omega_{v'}^\chi$ on $\caH_0$. The corresponding string does however not satisfy the gauge fixing condition. As shown in~\cite{Cha_GS}, this can be remedied by introducing an additional unitary intertwiner $V^\chi$ which is explicitly given as a limit, in the weak operator topology, of operators corresponding to closed strings. In particular, $V^\chi\Omega_0 = \Omega_0$. Precisely,
\begin{equation*}
\Gamma^\chi_{\gamma'}(A) = \lim_{n\to\infty}(\caF_{v,v',n}^\chi)\str (\caF^{ \chi}_{\gamma(v'\to v)})\str \Gamma^\chi_\gamma(A)
\caF^{ \chi}_{\gamma(v'\to v)}\caF_{v,v',n}^\chi
\end{equation*}
for all $A\in\cstar_{\mathrm{loc}}$, and $\omega_0((\caF_{v,v',n}^\chi)\str A \caF_{v,v',n}^\chi) = \omega_0$ as well as $V ^\chi= \wlim_{n\to\infty}\pi_0(\caF_{v,v',n}^\chi)$, where $\caF_{v,v',n}^\chi$ are illustrated in Figure~\ref{fig:transporters}. We conclude that
\begin{equation*}
(\caH_0, \pi_0\circ\Gamma_\gamma^\chi, T_{v\to v'}\Omega_0)
\end{equation*}
is a GNS representation of $\omega_{v'}^\chi$, with the gauge fixed as prescribed. 

\begin{figure}[htb]
  \centering
  \includegraphics[width=0.275\linewidth]{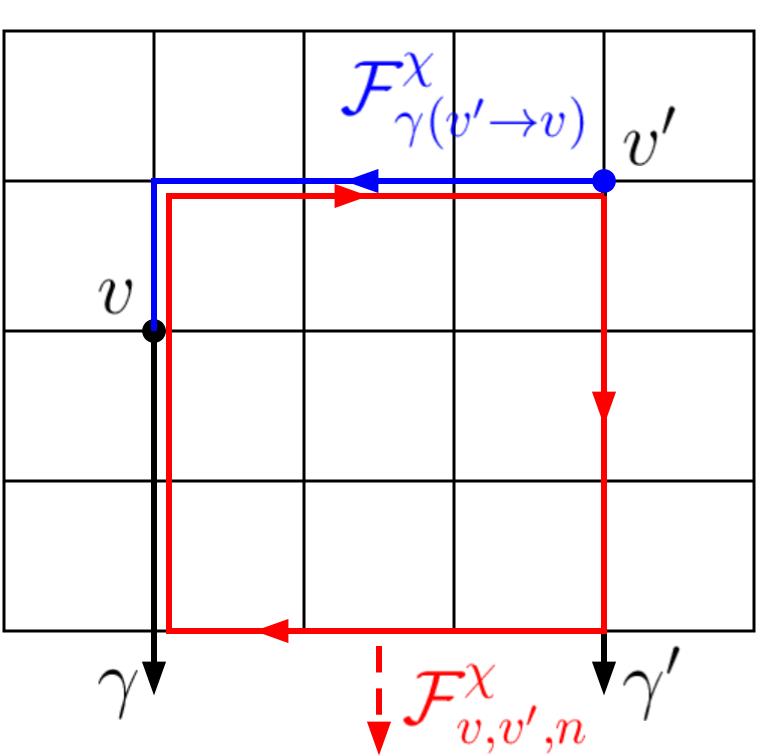}
  \caption{Transporting a charge from $v$ to $v'$ and respecting the gauge fixing. The sequence of unitaries $(\pi_0(\caF_{v,v',n}^\chi))_{n\in\bbN}$ has a good weak limit as the bottom stretch is sent to $\infty$.}
  \label{fig:transporters}
\end{figure}

For notational clarity and simplicity, we denote the vectors $T_{v\to v'}\Omega_0$ of $\caH_0$ in the representation $\pi_0\circ\Gamma_\gamma^\chi$ by $\{\vert \chi,v'\rangle:v'\in\caV\}$ and let $\caH_\chi\subset \caH_0$ be the space they span; in particular, $\Omega_0 = \vert \chi,v\rangle$. They satisfy 
\begin{equation*}
N\vert \chi,v\rangle = \vert \chi,v\rangle
\end{equation*}
for all $v\in \caV$ and we refer to them as one-particle states. The vector $\vert \chi,v\rangle$ corresponds to a state having exactly one charge $\chi$ localized at vertex $v$. The same holds for the dual picture, yielding one-particle vectors $\{\vert g,f'\rangle:f'\in\caF\}$ of $\caH_0$ in the representation $\pi_0\circ\Gamma^g_{\overline \gamma}$, for any $g\in G$.

Finally, in the GNS space of the states $\omega^{\chi,g}_{v,f}$, the corresponding vectors $\{\vert (\chi,v), (g,f)\rangle:v\in \caV,f\in\caF\}$ are such that $N\vert (\chi,v), (g,f)\rangle = 2\vert (\chi,v), (g,f)\rangle$ and we refer to their span $\caH_{\chi,g}\subset \caH_0$ as a two-particle subspace. In fact, $\caH_{\chi,g}$ is completely characterized by 
\begin{equation*}
\psi\in\caH_{\chi,g}\quad \Longleftrightarrow\quad N\psi = 2\psi\text{ and }N^\chi\psi = \psi,\,N^g\psi = \psi.
\end{equation*}
We note here, and shall use it later, that
\begin{equation*}
\caH_{\chi,g} \simeq\ell^2(\bbZ^2\times\bbZ^2).
\end{equation*}
We also point out that the transportability of charges remains simple in the two-particle state. Indeed, while $\caF$ operators corresponding to intersecting strings and dual strings do not commute, the corresponding automorphisms do so: $\Gamma_{\overline \gamma}^g((\caF_{v,v',n}^\chi)\str (\caF^{ \chi}_{\gamma(v'\to v)})\str \Gamma_\gamma^\chi(A) \caF^{ \chi}_{\gamma(v'\to v)}\caF_{v,v',n}^\chi) = (\caF_{v,v',n}^\chi)\str (\caF^{ \chi}_{\gamma(v'\to v)})\str (\Gamma^g_{\overline \gamma}\circ \Gamma_\gamma^\chi)(A) \caF^{ \chi}_{\gamma(v'\to v)}\caF_{v,v',n}^\chi$. Hence,
\begin{equation*}
(\caH_0, \pi_0\circ\Gamma^g_{\overline \gamma}\circ \Gamma_\gamma^\chi, T^\chi_{v\to v'}\Omega_0)
\end{equation*}
is a GNS representation of $\omega^{\chi,g}_{v',f}$, with the gauge fixed as prescribed. 

Since charge is conserved, we obtain the following:
\begin{prop}
Let $K^{\chi,g}_{\gamma,\overline\gamma}$ be the GNS Hamiltonian introduced in~(\ref{GNS Hamiltonian}). Then
\begin{equation*}
K^{\chi,g}_{\gamma,\overline\gamma}\caH_{\chi,g}\subset\caH_{\chi,g}.
\end{equation*}
\end{prop}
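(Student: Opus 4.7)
My plan is to leverage the characterization of $\caH_{\chi,g}$ as the joint eigenspace
$$
\{\psi\in\caH_0 : N\psi=2\psi,\,N^\chi\psi=\psi,\,N^g\psi=\psi\},
$$
so that the inclusion $K^{\chi,g}_{\gamma,\overline\gamma}\caH_{\chi,g}\subset\caH_{\chi,g}$ reduces to verifying that $K^{\chi,g}_{\gamma,\overline\gamma}$ strongly commutes with each of the three self-adjoint operators $N$, $N^\chi$, $N^g$ on $\caH_0$. The underlying physical fact is charge conservation, already recorded at the algebraic level in Proposition~\ref{prop:gauge invariance}.

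First I would promote Proposition~\ref{prop:gauge invariance} to the GNS level. The dynamics $\tau_t$ is implemented on $\caH_0$ by $\ep{\iu t K^{\chi,g}_{\gamma,\overline\gamma}}$. By the same weak-limit procedure used above to produce $N^{\hat G}$ from $\wlim_{\Lambda\to\caV}\pi_0(\caN^{\hat G}_\Lambda)$ in the vacuum representation, the gauge flows $\alpha^\chi_s$ and $\beta^g_s$ are implemented in the representation $\pi_0\circ\Gamma^g_{\overline\gamma}\circ\Gamma^\chi_\gamma$ by strongly continuous one-parameter unitary groups, whose generators agree with the operators $N^\chi$ and $N^g$ entering the characterization of $\caH_{\chi,g}$. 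The only modification relative to the vacuum case is a bounded local correction of the form $\caA^\iota_{\partial_0\gamma}-\caA^\chi_{\partial_0\gamma}$, together with its magnetic counterpart, induced by the conjugation $\Gamma^g_{\overline\gamma}\circ\Gamma^\chi_\gamma$ (in direct analogy with~(\ref{change of static})). Applying the GNS representation to the identity $\alpha^\chi_s\circ\tau_t=\tau_t\circ\alpha^\chi_s$ then yields commutation of the two implementing unitary groups, which by Stone's theorem is equivalent to strong commutation of their generators.

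The main technical step is this last passage from commuting automorphisms to strongly commuting unbounded generators, since in principle two implementers of a single automorphism may differ by a cocycle. I would handle this by exhibiting a common invariant core, for instance
$$
\caD=\{\pi_0(\Gamma^g_{\overline\gamma}\circ\Gamma^\chi_\gamma(A))\Omega_0 : A\in\cstar_{\mathrm{loc}}\},
$$
on which the commutators $[K^{\chi,g}_{\gamma,\overline\gamma},N^\chi]$ and $[K^{\chi,g}_{\gamma,\overline\gamma},N^g]$ can be evaluated term-by-term: the local continuity equation~(\ref{cont eq}) makes the bulk contributions vanish, and a telescoping argument eliminates the boundary contributions in the infinite-volume limit. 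Once strong commutation of $K^{\chi,g}_{\gamma,\overline\gamma}$ with $N^\chi$, $N^g$ and $N=N^{\hat G}+N^G$ is established, the functional calculus gives that the joint spectral projection onto the eigenvalues $(N,N^\chi,N^g)=(2,1,1)$, whose range is exactly $\caH_{\chi,g}$, commutes with $K^{\chi,g}_{\gamma,\overline\gamma}$, which is the desired invariance.
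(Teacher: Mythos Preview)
Your proposal is correct and follows exactly the line the paper indicates: the paper gives no proof beyond the sentence ``Since charge is conserved, we obtain the following,'' and your argument is a careful elaboration of that remark, promoting the algebraic gauge invariance of Proposition~\ref{prop:gauge invariance} to strong commutation of $K^{\chi,g}_{\gamma,\overline\gamma}$ with the number operators $N,N^\chi,N^g$ in the GNS representation. The additional care you take with cores and the passage from commuting automorphisms to strongly commuting generators is appropriate, though the paper is content to leave those details implicit.
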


We now concentrate our attention on the Hamiltonian restricted to this two-particle (respectively one if either $\chi = \iota$ or $g = 1$) invariant subspace. The fact that the GNS representation and $K^{\chi,g}_{\gamma,\overline\gamma}$ are explicit allows us to compute all relevant matrix elements, allowing for a spectral study of the model.

%%%%%%%%%
\subsection{A matrix representation of the Hamiltonian}\label{sec: Hamiltonian warm up} 

To start with an easy case, we first consider~(\ref{Hopping in chi GNS}), denote again $v = \partial\gamma$, and compute
\begin{equation*}
\langle \chi,{v_1} | K_\gamma^\chi |\chi,{v}\rangle = \langle \chi,{v_1} | K_\gamma^\chi | \Omega_0 \rangle
=\langle \chi,{v_1} |\Omega_0\rangle + \lambda_\epsilon \sum_{e:v\in\partial e}\langle \chi,{v_1} |\pi_0(\caF^{\overline\chi}_{(v,e)}) \Omega_0\rangle
\end{equation*}
If $v\neq v_1$, then
\begin{equation*}
\langle \chi,{v_1} | \chi,{v}\rangle = \langle \pi_0(\caF^\chi_{\gamma(v_1\to v)})\Omega_0 |\Omega_0\rangle = \omega_0(\caF^\chi_{\gamma(v_1\to v)})
\end{equation*}
vanishes since $\pi_0(\caF^\chi_{\gamma(v_1\to v)})$ adds one pair of particles to the existing $\Omega_0$. More formally, since $\chi\neq\iota$, there is $g\in G$ such that $\chi(g)\neq 1$; the invariance of $\omega_0$ under the action of $\caF^g_{\overline{\lambda}}$ for any closed dual path $\overline{\lambda}$ and~(\ref{strings crossing}) together yield that for any $A\in\cstar$,
\begin{equation*}
\omega_0(\caF^\chi_{\gamma(v_1\to v)}) = \omega_0((\caF^g_{\overline{\lambda}})\str\caF^\chi_{\gamma(v_1\to v)}\caF^g_{\overline{\lambda}}) = 
\chi(g)\omega_0(\caF^\chi_{\gamma(v_1\to v)})
\end{equation*}
whenever $\overline{\lambda}$ winds once around $v$ but not around $v_1$, and so $\omega_0(\caF^\chi_{\gamma(v_1\to v)}) = 0$ indeed.

For the same reason, $\langle \chi,{v_1} |\pi_0(\caF^{\overline\chi}_{(v,e)}) \Omega_0\rangle$ is non-zero only if $\caF^\chi_{\gamma(v_1\to v)} = \caF^{\overline\chi}_{(v,e)}$ (in which case it is equal to $1$) namely $\gamma(v_1\to v) = (v_1,e)$. As could have been expected from the basic picture of a single particle freely hopping on the lattice, we conclude that the one-particle Hamiltonian reduces to a multiple of the discrete Laplacian:
\begin{equation*}
\langle \chi,{v_1} | K_\gamma^\chi |\chi,{v_2}\rangle=
\begin{cases}
\lambda_\epsilon & \text{if }d(v_1,v_2)=1, \\ 
1 & \text{if }v_1 = v_2,\\
0& \text{otherwise.}
\end{cases}
\end{equation*}
The same holds for any other type of particle, with the hopping amplitude adjusted accordingly. In the following, we shall renormalize the Hamiltonian by replacing $K_\gamma^\chi\to K_\gamma^\chi-\idtyty$.

In the two-particle situation, it will be convenient to insist on the representation of the lattice using sites. We first rescale the lattice so that vertices are on $v\in(4\bbZ)^2$ and faces lie on $f\in(4\bbZ+2)^2$. The sites are then labelled by $(2\bbZ+1)^2$. Physically, they correspond to the coordinates of the center of mass of pairs of particles, which we denote 
\begin{equation*}
X = \frac{1}{2}(f+v) \in (2\bbZ+1)^2.
\end{equation*}
Their relative position is given by the complementary coordinate
\begin{equation*}
d = \frac{1}{2}(f-v)\in (2\bbZ+1)^2,
\end{equation*}
and a general two-particle wavefunction is given by $\psi(X,d)$ in $\ell^2((2\bbZ+1)^2\times(2\bbZ+1)^2)$.

%%%%%%%%%
\subsection{Dirac anyons}

We now consider a two-particle state having a charge $\chi\in\hat G$ at $v\in\caV$ and a charge $g\in G$ at $f\in\caF$. We let $\lambda^\epsilon = \lambda^\mu = \lambda$ to ensure that the duality symmetry holds, and denote $\lambda^{\epsilon\mu}=\rho$. We shall label the Hamiltonian $K^{\chi,g}_{\lambda,\rho}$. As we shall see, the anyonic nature of the composite particle will reveal itself in the appearance of a non-trivial spectrum. 

Let us first consider the case $\lambda = 0, \rho = 1$. Let $S = \{(1,1),(1,-1),(-1,1),(-1,-1)\}$. The results of Section~(\ref{sec:Dynamics}), see also Figure~\ref{fig:hoppinG phases}, imply that the subspace $\ell^2((2\bbZ+1)^2\times S)$ is invariant for $K^{\chi,g}_{0,1}$, and that its orthogonal complement is the kernel of the Hamiltonian. After Fourier transforming with respect to the variable $X$, we conclude that, for $k\in\bbT^2$, 
\begin{equation*}
        K^{\chi,g}_{0,1}(k) = 2 
            \begin{pmatrix}
0 & \cos(2k_y) & \cos(2k_x) & 0\\
\cos(2k_y) & 0 & 0 & \chi(g) \cos(2k_x) \\
\cos(2k_x) & 0 & 0 &\cos(2k_y)\\
0 & \overline{\chi}(g) \cos(2k_x) & \cos(2k_y) & 0\\
\end{pmatrix}.
\end{equation*}
As discussed earlier, the phase $\chi(g)$ is accumulated by the hopping of the particles across each other's string, which happens only in the case $(1,-1)\leftrightarrow(-1,-1)$, see again Figure~\ref{fig:hoppinG phases}. We obtain four bands given explicitly by
\begin{equation*}
E^{\chi,g}_{0,1}(k) = \pm2\sqrt{\cos^2(2k_x)+\cos^2(2k_y)\pm\vert \cos(2k_x)\cos(2k_y)\vert \sqrt{2+2\mathrm{Re}\chi(g))}}.
\end{equation*}
Generically, this exhibits a Dirac cone at $(\pi/4,\pi/4)$, see Figure~\ref{fig:Dirac cone}. The case $\chi(g) = -1$, which arises for example for $G = \bbZ_2$, is special in that two sheets are exactly degenerate.
\begin{rem}
The case $\chi(g) = 1$ is trivial with dispersion relation $\pm 2\cos(2k_x)\pm2\cos(2k_y)$. It corresponds to simple bosons with no phase being accumulated as the strings cross each other. Unsurprisingly maybe, the anyonic nature of the quasi-particles manifests itself in the spectrum of the two-particles Hamiltonian.
\end{rem}

\begin{figure}
    \centering
    \begin{minipage}{0.49\textwidth}
        \centering
        \includegraphics[width=\textwidth]{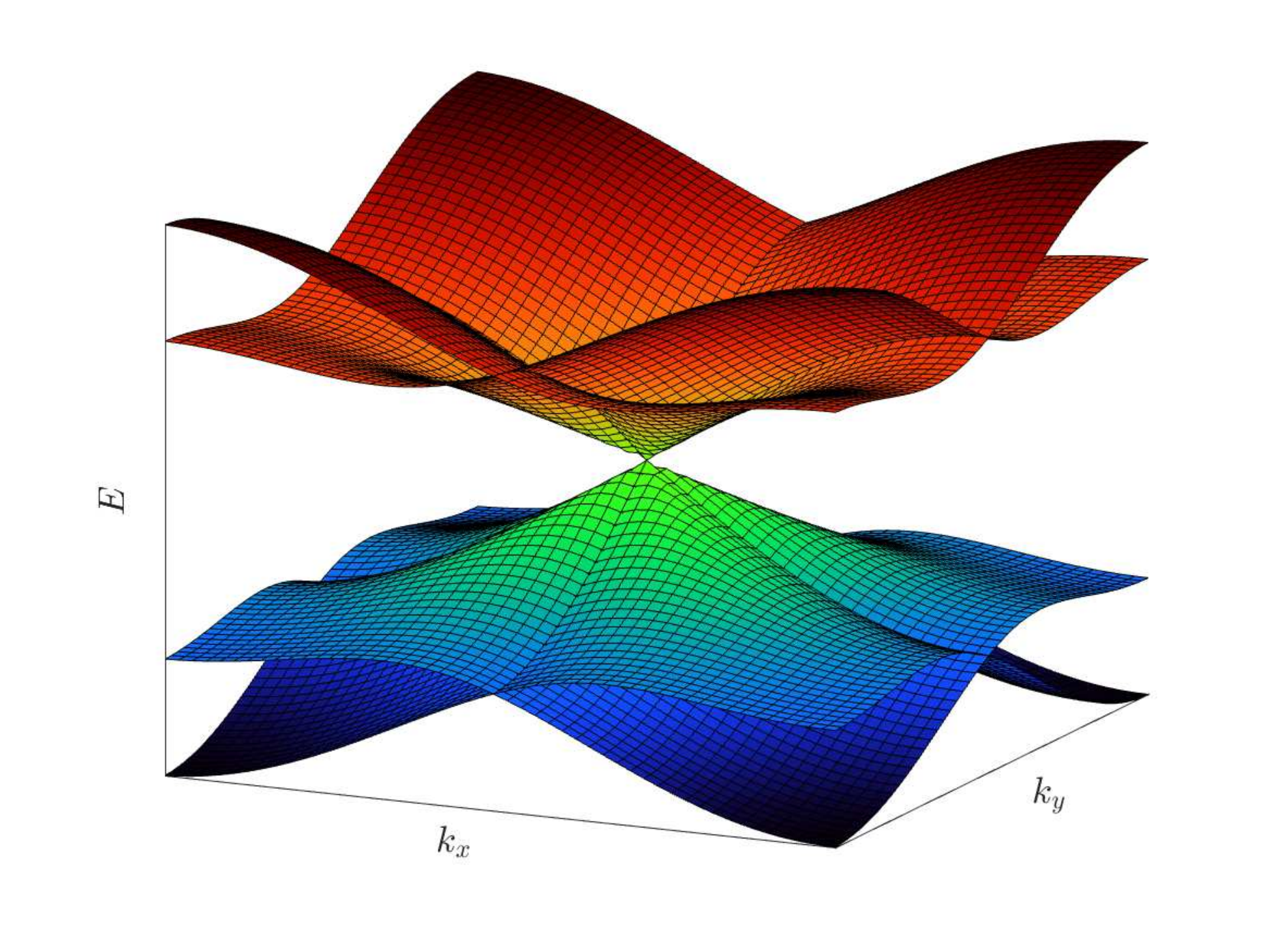} % first figure itself
        \caption{The Dirac cone appearing at the point $(\frac{\pi}{4},\frac{\pi}{4})$, here for $\chi(g) = \ep{\frac{2\pi\iu}{3}}$.}\label{fig:Dirac cone}
    \end{minipage}\hfill
    \begin{minipage}{0.49\textwidth}
        \centering
        \includegraphics[width=\textwidth]{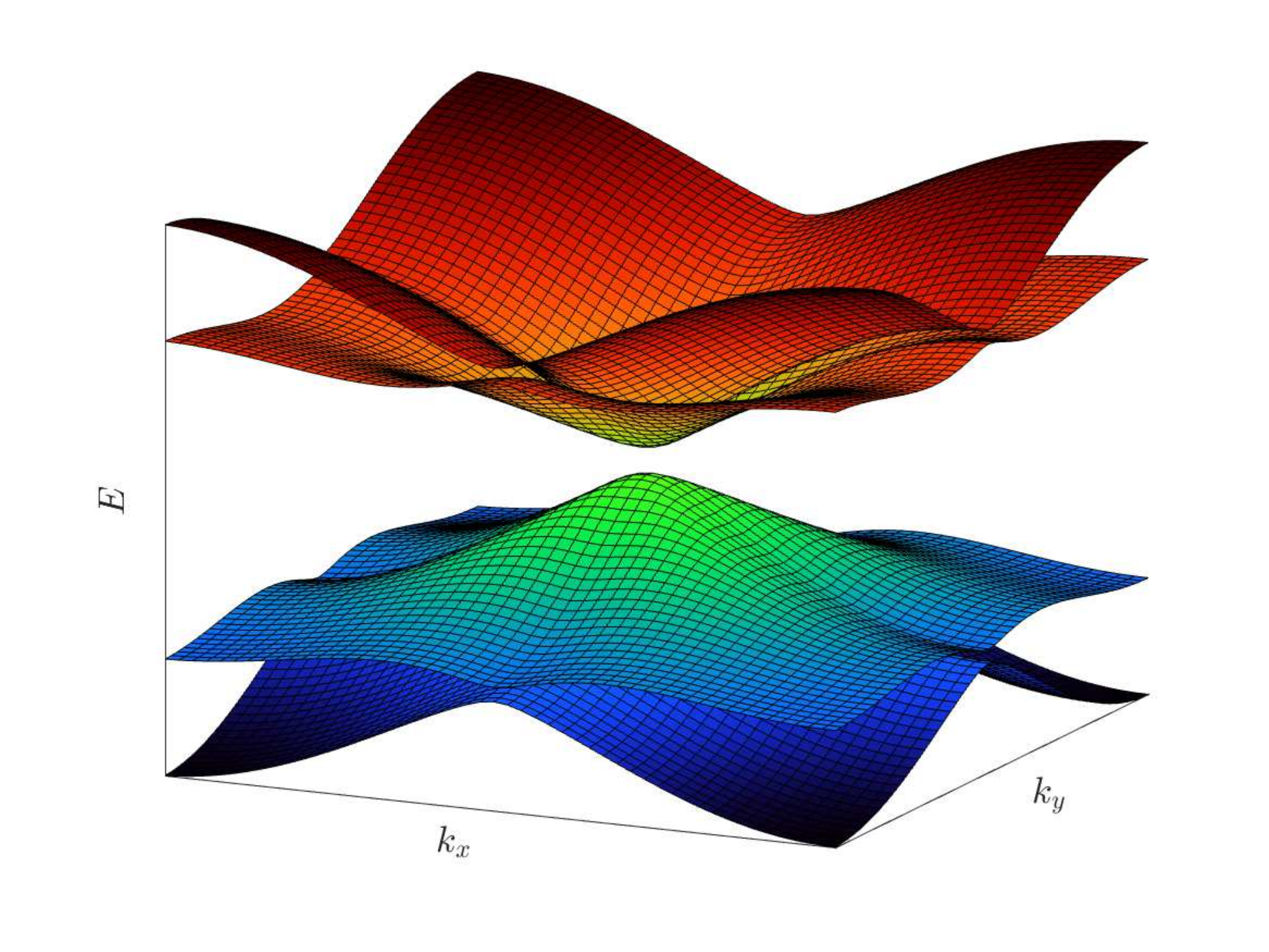} % second figure itself
        \caption{Gap opening in the presence of symmetry breaking, again for $\chi(g) = \ep{\frac{2\pi\iu}{3}}$.}\label{fig:Avoided cone}
    \end{minipage}
\end{figure}

%\begin{figure}[htb]
% \centering
% \begin{minipage}[width=0.49\textwidth]
% \centering
%   \includegraphics[width=\textwidth]{}
% \end{minipage}
% \hfill
%  \begin{minipage}[width=0.49\textwidth]
% \includegraphics[width=\textwidth]{Figures/lam_0_EvsK_N3_symbreak.png} 
% \end{minipage}
%  \caption{Spectrum}
%  \label{fig:bound states}
%\end{figure}

The Dirac cone is a consequence of the duality symmetry, and it is not stable under the breaking of the symmetry. If we consider an asymmetric hopping by giving a different weight to the two terms of~(\ref{em hopping}), namely $(
\mathcal{F}^{\overline{\chi}}_{(v,e)} \plus{\iota}_{v'} \plus{\chi}_{v} + \mathcal{ F}^{\overline \chi}_{(v',e)} \plus{\iota}_{v} \plus{\chi}_{v'}
)(\caB_f^h + \caB_{f'}^h)$ and $(1+m)(
\mathcal{F}^{\bar{h}}_{(f,e)} \face^{1}_{f'} \face^{h}_{f} + \mathcal{F}^{\bar h}_{(f',e)} \face^{1}_{f} \face^{h}_{f'}
)(\caA_v^\chi + \caA_{v'}^\chi)$, with $m>0$, the dispersion relation then reads
\begin{equation*}
E^{(\chi,g)}_{0,1}(k) = \pm\sqrt{\vert c(k_x)\vert^2+\vert c(k_y)\vert^2\pm\vert c(k_x)\vert\vert c(k_y)\vert\sqrt{2+2\mathrm{Re}\chi(g)}}.
\end{equation*}
where $c(s) = \ep{2\iu s} + (1+m)\ep{-2\iu s}$ opening up a gap of size $2 m$ at the Dirac point, see Figure~\ref{fig:Avoided cone}.

%%%%%%%%%
\subsection{Bound states and scattering states.} The dynamics discussed above is that of a strictly bound pair of a flux and a charge. In the presence of the additional terms producing independent hoppings of the charge and the flux, we now show that the bound pair may become unstable and decay into two freely moving particles scattering away from each other.

First of all we prove that, in the extreme case $\rho = 0$ and $\lambda \neq 0$, there is no bound state of the two particles at all. The restriction of $H^\epsilon+H^\mu$ to the invariant space $\caH_{\chi,g}$ is immediately read off~(\ref{eHops},\ref{muHops}). As in Section~\ref{sec: Hamiltonian warm up}, it reduces to a simple discrete Laplacian, up to the phase picked up whenever the $\chi$ particle crosses the string $\overline\gamma$ associated with the $g$ flux and reciprocally. After subtraction of the constant, we obtain
\begin{equation}\label{HL gauge}
K^{\chi,g}_{1,0}(k) = 2\cos({2 k_{x}})\Delta \otimes\idtyty+2 \cos(2k_y) \idtyty \otimes \Delta
+2\cos({2 k_{x}}) \left(\chi(g) \sigma^+ + \overline{\chi(g)}\sigma^- - \sigma^x\right) \otimes \Theta,
\end{equation}
acting on the fibers $\ell^2(2\bbZ+1)\otimes \ell^2(2\bbZ+1)$ and corresponding to the dynamics of the relative coordinate. Here
\begin{equation*}
(\Delta\psi)(z) = \psi(z-2) + \psi(z+2)
\end{equation*}
for any $z\in 2\bbZ + 1$ and $\Theta$ is the multiplication operator by the characteristic function of $\{j\in2\bbZ+1,j<0\}$. In the last term, $\sigma^\sharp$ are the Pauli matrices acting in the space spanned by $|\pm1\rangle$ (as a function in $\ell^2(2\bbZ + 1)$, the vector $\vert m\rangle$ for $m\in2\bbZ + 1$ is the normalized sequence whose only non-zero element is at $m$), and this term amounts to replacing the trivial phase accumulated along a hopping by $\chi(g)$, respectively $\overline\chi(g)$, whenever the two strings cross. 
\begin{prop}\label{No bound state}
Let $k\neq(\frac{\pi}{4},\frac{\pi}{4})\:(\mathrm{mod}\frac{\pi}{2})$. The pure point spectrum of $K^{(\chi,g)}_{1,0}(k)$ is empty.
\end{prop}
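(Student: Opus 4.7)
Write $a:=2\cos(2k_x)$ and $b:=2\cos(2k_y)$; the hypothesis guarantees $(a,b)\neq(0,0)$. My plan is to take any candidate eigenvector $\psi\in\ell^2((2\bbZ+1)^2)$ solving $K^{(\chi,g)}_{1,0}(k)\psi=E\psi$, apply a partial Fourier transform in $z_1$, and then exploit the $\ell^2$ constraint in the upper half-plane $\{z_2\geq 1\}$ together with the half-line structure of the perturbation to force $\psi=0$.

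First I would dispose of the two degenerate cases. If $a=0$ then $K=b\,(I\otimes\Delta)$ has empty pp by absolute continuity of $\Delta$ on $\ell^2(2\bbZ+1)$. If $b=0$ then, writing $W:=(\chi(g)-1)\sigma^++(\overline{\chi(g)}-1)\sigma^-$, the operator $K=a\Delta\otimes I+aW\otimes\Theta$ commutes with $I\otimes P_\pm$ (spectral projections of $\Theta$) and splits as $a\Delta\otimes I$ (empty pp) and $a(\Delta+W)\otimes I$. For the 1D defect Laplacian $\Delta+W$, matching the decaying half-line ansätze $\phi(z_1)=Ar^{(z_1-1)/2}$ ($z_1\geq 1$) and $\phi(z_1)=Br^{(-z_1-1)/2}$ ($z_1\leq -1$), with $|r|<1$ and $\lambda=r+r^{-1}$, at the modified bond gives $B=\overline{\chi(g)}\,r^{-1}A$ and $A=\chi(g)\,r^{-1}B$, hence $A(r^{-2}-1)=0$; since $|r|<1$ we conclude $A=B=0$.

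For the generic case $a,b\neq 0$, set $\alpha(z_2):=\psi(1,z_2)$, $\beta(z_2):=\psi(-1,z_2)$, $\gamma(p):=(E-2a\cos p)/b$, and $\hat\psi(p,z_2):=\sum_n\psi(2n+1,z_2)e^{-ipn}$; the eigenvalue equation becomes
\[
\hat\psi(p,z_2+2)+\hat\psi(p,z_2-2)-\gamma(p)\hat\psi(p,z_2)=-\tfrac{a}{b}\Theta(z_2)\bigl[(\chi(g)-1)\beta(z_2)+(\overline{\chi(g)}-1)\alpha(z_2)e^{ip}\bigr],
\]
which is homogeneous on $\{z_2\geq 1\}$ with characteristic roots $r_\pm(p)$, $r_+r_-=1$. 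Define $S_B:=\{p:|\gamma(p)|\leq 2\}$, the set where the roots sit on the unit circle. For a.e.\ $p\in S_B$ no nontrivial $\ell^2$-oscillatory solution on $\{z_2\geq 1\}$ exists, so $\hat\psi(p,z_2)=0$ there; the equation at $z_2=1$ then yields $\hat\psi(p,-1)=0$, and iterating the recursion downward expresses $\hat\psi(p,-(2j+1))$ for $j\geq 1$ as a Chebyshev-polynomial-weighted sum of the source. The $\ell^2(\{z_2\leq -1\})$ requirement (under oscillatory characteristic roots) amounts to killing both modes $e^{\pm i\theta(p)j}$, $\gamma(p)=2\cos\theta(p)$, i.e.,
\[
(\chi(g)-1)\hat\beta^-(\pm\theta(p))+(\overline{\chi(g)}-1)e^{ip}\hat\alpha^-(\pm\theta(p))=0,
\]
where $\hat\alpha^-,\hat\beta^-\in H^2(\bbT)$ are the generating functions of $\{\alpha(-(2j+1))\}_{j\geq 0}$, $\{\beta(-(2j+1))\}_{j\geq 0}$. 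Each value of $\theta$ has two distinct pre-images $p_1,p_2\in S_B$ with $e^{ip_1}\neq e^{ip_2}$, so subtracting the two resulting equations eliminates $\hat\beta^-(\theta)$ and then $\hat\alpha^-(\theta)$, for $\theta$ in the positive-measure set $\{\theta(p):p\in S_B\}$. By the F.\ and M.\ Riesz uniqueness theorem for $H^2(\bbT)$, $\hat\alpha^-\equiv\hat\beta^-\equiv 0$, so $\alpha\equiv\beta\equiv 0$ on $\{z_2\leq -1\}$; the source in the recursion vanishes identically, $\psi$ satisfies the free equation $K_0\psi=E\psi$, and thus $\psi=0$.

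For $E$ outside the essential spectrum $[-2(|a|+|b|),2(|a|+|b|)]$ (so $S_B=\emptyset$), I would invert the non-homogeneous recursion using the exponentially decaying Green's function $G_p(z_2,z_2')=r_-(p)^{|z_2-z_2'|/2}/(r_-(p)-r_+(p))$ and reduce the self-consistency for $(\alpha,\beta)|_{\{z_2\leq -1\}}$ to a $2\times 2$ Wiener--Hopf system built from two-dimensional lattice Green's function integrals; its triviality is established by a determinant/winding computation that is nonzero precisely because the perturbation is a pure phase $\chi(g)\neq 1$ rather than an attractive potential. This Wiener--Hopf step is the main technical obstacle; the essential-spectrum case, which is where any embedded eigenvalue would live and is the physically most subtle regime, is handled cleanly by the Hardy-space argument above.
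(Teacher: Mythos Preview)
Your strategy---partial Fourier transform in $z_1$, exploitation of the homogeneous half-space $\{z_2\geq 1\}$, and a Hardy-space uniqueness argument---is genuinely different from the paper's route, which is a virial/positive-commutator argument in the spirit of Graf--Schenker: one sets $q(x)=|x|^2$, computes the double commutator $\tfrac{i}{8}[K,[K,q]]=-\sum_j c_j^2(T_{e_j}-T_{-e_j})^2+C$, observes that the flux correction $C$ vanishes because $q$ is constant on the four sites surrounding the puncture, and finishes with a discrete diamagnetic inequality. That argument treats all energies $E$ at once and never needs to distinguish the band from its complement.

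However, your argument has a genuine gap at the key step. You assert that for $p\in S_B$ the condition $\phi_p\in\ell^2(\{z_2\leq -1\})$ with $\phi_p(0)=\phi_p(-1)=0$ ``amounts to killing both modes $e^{\pm i\theta(p)j}$'', i.e.\ to the \emph{pointwise} vanishing $\hat s_p(\pm\theta(p))=0$. Writing the solution via variation of constants, $\phi_p(j)=c_+e^{ij\theta}A_j+c_-e^{-ij\theta}B_j$ with $A_j=\sum_{m<j}e^{-im\theta}s_p(m)$ and $B_j=\sum_{m<j}e^{im\theta}s_p(m)$, the $\ell^2$ requirement is \emph{not} equivalent to $A_\infty=B_\infty=0$ unless you also know that the tails $A_j-A_\infty$, $B_j-B_\infty$ lie in $\ell^2(j)$. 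For a generic source $s_p\in\ell^2\setminus\ell^1$ neither the convergence of $A_j$ at the specific point $\theta(p)$ nor the $\ell^2$-summability of the tail is automatic; Carleson gives a.e.\ convergence of partial Fourier sums, but you are evaluating at the single $p$-dependent frequency $\theta(p)$, and the rate of convergence is uncontrolled. Equivalently, in generating-function language $\Phi_p(w)(w-e^{i\theta(p)})(w-e^{-i\theta(p)})=wS_p(w)$ in $H^2$: since the quadratic factor is outer (roots on $\partial\bbD$), this gives an $L^2$-divisibility condition on $S_p$, not a pointwise zero of its boundary trace. So the linear system you subtract to eliminate $\hat\beta^-(\theta)$ is not justified as written, and the appeal to F.\ and M.\ Riesz is premature.

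Secondly, you explicitly leave the regime $E\notin\sigma_{\mathrm{ess}}$ (where $S_B=\emptyset$) as a sketch, calling the Wiener--Hopf determinant computation ``the main technical obstacle''. That computation is not routine: the perturbation is a rank-two operator in $z_1$ tensored with a half-line projection in $z_2$, and showing the resulting convolution system has trivial kernel for \emph{every} $E$ outside the band requires an argument you have not supplied. The paper's commutator method bypasses both difficulties simultaneously.
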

\begin{proof}
We recall the setting of~\cite{GrafSchenker}, which is very close to the current situation. For $j=1,2$, let $T^A_j\in\caL(\ell^2((2\bbZ+1)^2))$ be the magnetic translations in the direction of the unit vectors $e_j$, associated with a magnetic vector potential $A: (2\bbZ+1)^2\times(2\bbZ+1)^2\to\bbR$:
\begin{equation*}
(T_{e_j}\psi)(x) = \ep{-\iu A(x,x+2e_j)}\psi(r+2e_j).
\end{equation*}
Here $A$ is so that $-A(x,y) = A(y,x)$. A simple calculation yields that $(T_{e_j})\str = (T_{e_j})^{-1} = T_{-e_j}$. They satisfy the following commutation relations
\begin{equation*}
T_{e_j}^{-1}T_{e_k}^{-1}T_{e_j}T_{e_k} = \ep{-\iu\phi_{(e_k,e_j)}},
\end{equation*}
where
\begin{equation*}
\phi_{(e_k,e_j)}(x) = A(x,x-2e_j) + A(x-2e_j, x-2e_j-2e_k) - A(x-2e_k,x-2e_k-2e_j) - A(x,x-2e_k)
\end{equation*}
is naturally interpreted as the flux through the face centred at $x-e_k-e_j\in(2\bbZ)^2$; In particular it is equal to zero whenever $j=k$. Note that $\ep{-\iu\phi_{(e_k,e_j)}}$ is to be understood as the corresponding multiplication operator. 

The magnetic Hamiltonian is given by
\begin{equation*}
H = c_1(T_{e_1} + T_{-e_1}) + c_2(T_{e_2} + T_{-e_2}),
\end{equation*}
where $c_1,c_2$ are real constants. Let $q(x) = \langle x,x\rangle$, where $\langle,\rangle$ is the restriction of the Euclidean inner product to $(2\bbZ+1)^2$. Let now
\begin{equation*}
A = \iu[H,q]. 
\end{equation*}
We observe that for any function $f$,
\begin{equation*}
[T_{e_j},f] = (D_{e_j}f)T_{e_j}
\end{equation*}
where $(D_{e_j}f)(x) = f(x+2e_j) - f(x)$, and that 
\begin{equation*}
D_{e_j}D_{e_k}q = 8 \langle e_j,e_k\rangle.
\end{equation*}
A computation now yields that
\begin{equation*}
\frac{\iu}{8}[H, A] = -\sum_{j,k=1}^2(T_{e_j}-T_{-e_j}) c_j \langle e_j,e_k\rangle c_k(T_{e_k}-T_{-e_k}) + C,
\end{equation*}
where
\begin{equation*}
C = \sum_{j, k=1}^2(c_jc_k)(D_{e_k}q)\left[T_{e_j} + T_{-e_j},T_{e_k}\right] + (c_jc_k)(D_{-e_k}q)\left[T_{e_j} + T_{-e_j},T_{-e_k}\right]
\end{equation*}
is only non-zero because of the non-commutativity of the magnetic translations, since
\begin{equation*}
[T_{e_j},T_{e_k}] = \left(\idtyty - \ep{-\iu\phi_{(-e_j,-e_k)}})\right)T_{e_j}T_{e_k}.
\end{equation*}

Let us now consider the specific case $K^{\chi,g}_{1,0}(k)$, namely
\begin{equation*}
c_1 = 2\cos(2k_x),\quad c_2 = 2\cos(2k_y),\quad \ep{-\iu A(x,x+2e_j)}=\chi(g) \delta_{j,1}\delta_{x_1,-1}\delta_{x_2<0}.
\end{equation*}
Here, the magnetic vector potential corresponds to a flux tube through the plaquette at $(0,0)$. Hence, the factor $1 - \ep{-\iu\phi_{(\mp e_j,\mp e_k)}(x)}$ appearing in $C$ does not vanish if and only if $x\pm e_j\pm e_k = (0,0)$. In particular, it vanishes if $x\notin S$ and hence $C$ is a finite rank operator.

We now note that $q(x) = \Vert x\Vert^2$ is constant on $S$ and hence $(D_{e_k}q)(x)$ vanishes for all $(e_k,x)$ such that $\phi_{(-e_j,-e_k)}(x)\neq 0$. It follows that $C=0$ and
\begin{equation*}
\frac{\iu}{8}[K^{\chi,g}_{1,0}(k), A] = -\sum_{j=1}^2c_j^2(T_{e_j}-T_{-e_j})^2 = \sum_{j=1}^2c_j^2(T_{e_j}-T_{-e_j})\str(T_{e_j}-T_{-e_j}),
\end{equation*}
which is the sum of two non-negative, but mutually non-commuting, terms. The triangle inequality yields
\begin{equation*}
\vert ((T_{e_j}-T_{-e_j})\psi)(x)\vert\geq \big\vert\vert\psi(x+2e_j)\vert - \vert\psi(x-2e_j)\vert\big\vert
=\big\vert ((T_{e_j}^0-T_{-e_j}^0)\vert\psi\vert)(x) \big\vert,
\end{equation*}
where $T_{e_j}^0$ are translation operators with $A=0$. This is a discrete version of the diamagnetic inequality. Hence,
\begin{equation*}
\frac{1}{8}\langle \psi,\iu[K^{\chi,g}_{1,0}(k), A]\psi\rangle
\geq \sum_{j=1}^2c_j^2\big\Vert (T_{e_j}^0-T_{-e_j}^0)\vert\psi\vert\big\Vert^2
\end{equation*}
and it suffices to show that $\mathrm{Ker}(T_{e_j}^0-T_{-e_j}^0) = \{0\}$ to rule out pure point spectrum. That is immediate since the only square summable solution of
\begin{equation*}
((T_{e_j}^0-T_{-e_j}^0)\psi)(x) = \psi(x+2e_j)-\psi(x-2e_j) = 0
\end{equation*}
is the zero function.
\end{proof}

We now turn to the full Hamiltonian $K^{\chi,g}_{\lambda,\rho}$ in the two-particle sector. Let $P$ be the orthogonal projection onto the space spanned by $\vert \pm1\rangle$. Then
\begin{equation*}
K^{\chi,g}_{0,1}(k) = \cos({2 k_{x}})\sigma^{x} \otimes(\sigma^z+P)+2 \cos(2k_y) P \otimes \sigma^{x}
+\cos(2k_x)\left(\chi(g) \sigma^+ + \overline{\chi(g)} \sigma^-\right) \otimes(P-\sigma^z),
\end{equation*}
which, combined with $K^{\chi,g}_{1,0}(k)$ given in~(\ref{HL gauge}), yields the full Hamiltonian. 
\begin{prop}
For any $k\in\bbT^2$, let $R_\lambda(k) = 4 \lambda \left(\vert \cos(2k_x)\vert + \vert\cos(2k_y)\vert\right)$. Then
\begin{equation*}
\sigma_{\mathrm{ess}}\left(K^{\chi,g}_{\lambda,\rho}(k)\right) = \left[-R_\lambda(k),R_\lambda(k)\right],
\end{equation*}
in particular, the essential spectrum is independent of $\rho$.
\end{prop}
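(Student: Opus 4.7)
Write $K^{\chi,g}_{\lambda,\rho}(k) = \lambda K^{\chi,g}_{1,0}(k) + \rho K^{\chi,g}_{0,1}(k)$ and observe that every summand in the explicit expression for $K^{\chi,g}_{0,1}(k)$ displayed just before the statement is a tensor product whose two factors each act only on the two-dimensional subspace spanned by $\vert +1\rangle$ and $\vert -1\rangle$ of $\ell^2(2\bbZ + 1)$ (namely $\sigma^x, \sigma^z, \sigma^\pm$ and the projection $P$). Consequently $K^{\chi,g}_{0,1}(k)$ has rank at most four, is compact, and Weyl's theorem yields
\begin{equation*}
\sigma_{\mathrm{ess}}\bigl(K^{\chi,g}_{\lambda,\rho}(k)\bigr) = \sigma_{\mathrm{ess}}\bigl(\lambda K^{\chi,g}_{1,0}(k)\bigr),
\end{equation*}
which already proves the $\rho$-independence. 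It remains to compute $\sigma_{\mathrm{ess}}(\lambda K^{\chi,g}_{1,0}(k))$, and I will do so by sandwiching the full spectrum from above with a Schur-type norm bound and filling the interval from below with explicit singular Weyl sequences; throughout I assume $\lambda > 0$, the opposite sign being identical up to a reflection of spectra.

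For the upper bound, read $K^{\chi,g}_{1,0}(k) = H_0 + V$ from~(\ref{HL gauge}), where $H_0 = 2\cos(2k_x)\Delta \otimes \idtyty + 2\cos(2k_y)\idtyty \otimes \Delta$ is the free nearest-neighbour Laplacian on the relative lattice $(2\bbZ+1)^2$. A short computation in the position basis shows that $V$ only affects the matrix elements $\langle \pm 1, d_2\vert \cdot \vert \mp 1, d_2\rangle$ with $d_2 < 0$, replacing the coefficient $2\cos(2k_x)$ of $H_0$ there by $2\cos(2k_x)\chi(g)$ and $2\cos(2k_x)\overline{\chi(g)}$ respectively. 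In particular $\bigl\vert (K^{\chi,g}_{1,0}(k))_{xy}\bigr\vert = \vert(H_0)_{xy}\vert$ for every pair of sites $x, y$: the Aharonov--Bohm-like insertion of a single flux $\chi(g)$ at the plaquette $(0,0)$ only relocates phases. Schur's test then gives
\begin{equation*}
\bigl\|K^{\chi,g}_{1,0}(k)\bigr\| \leq \max_{x} \sum_{y} \bigl\vert (K^{\chi,g}_{1,0}(k))_{xy}\bigr\vert = 4\vert\cos(2k_x)\vert + 4\vert\cos(2k_y)\vert,
\end{equation*}
whence $\sigma\bigl(\lambda K^{\chi,g}_{1,0}(k)\bigr) \subset [-R_\lambda(k), R_\lambda(k)]$ and a fortiori the same inclusion holds for the essential spectrum.

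For the opposite inclusion, I construct singular Weyl sequences supported far from the support of $V$. Given $E \in [-R_\lambda(k), R_\lambda(k)]$, pick $q_1, q_2 \in [0, \pi/2]$ with $E = 4\lambda[\cos(2k_x)\cos(2q_1) + \cos(2k_y)\cos(2q_2)]$, let $\eta_n$ be a smooth cut-off to a box of side $n$ centred at $(0, n^2)$, and set $\psi_n(d_1, d_2) = Z_n^{-1}\eta_n(d_1, d_2)\ep{\iu(q_1 d_1 + q_2 d_2)}$, with $Z_n$ chosen so that $\|\psi_n\| = 1$. The standard boundary-layer estimate for truncated plane waves yields $\|(\lambda H_0 - E)\psi_n\| \to 0$, while $\psi_n \rightharpoonup 0$ since its support recedes to infinity. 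For $n$ large enough, $\psi_n$ is supported in $\{d_2 > 0\}$, hence $\Theta\psi_n = 0$ and therefore $V\psi_n = 0$, so $\psi_n$ is also a Weyl sequence for $\lambda K^{\chi,g}_{1,0}(k)$ at $E$, proving $E \in \sigma_{\mathrm{ess}}(\lambda K^{\chi,g}_{1,0}(k))$.

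The main subtlety is that $V$ itself is \emph{not} compact: the $\Theta$ factor, corresponding to the semi-infinite Dirac string attached to the flux, has infinite rank, so a naive application of Weyl's theorem to $H_0 + V$ would fail. The argument above bypasses this by never trying to remove $V$ via compactness; instead it exploits the gauge-theoretic observation that $V$ preserves the moduli of all matrix elements, which makes the $\|\cdot\|$-bound sharp, and places the Weyl sequences on the side of the plane where $V$ is supported trivially.
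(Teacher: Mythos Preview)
Your proof is correct and takes a genuinely different route from the paper's. Both arguments begin identically, observing that $K^{\chi,g}_{0,1}(k)$ is finite rank so that Weyl's theorem gives the $\rho$-independence. For the identification of $\sigma_{\mathrm{ess}}\bigl(\lambda K^{\chi,g}_{1,0}(k)\bigr)$, however, the paper proceeds by invoking a gauge transformation to the Aharonov--Bohm gauge together with a compactness result of De~Nittis--Schuba: in that gauge the magnetic Laplacian differs from the free one by a compact operator, and a second application of Weyl's theorem reduces the problem to the spectrum of the free Laplacian. Your argument is more elementary and entirely self-contained: the Schur bound exploits that the flux insertion only rotates phases and leaves $\vert K_{xy}\vert = \vert (H_0)_{xy}\vert$, giving the inclusion $\sigma\subset[-R_\lambda(k),R_\lambda(k)]$ directly; the reverse inclusion comes from Weyl sequences placed in the half-space $\{d_2>0\}$ where the string is absent and $V$ vanishes identically. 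Your closing observation that $V$ is \emph{not} compact---because of the semi-infinite $\Theta$---is precisely the obstruction that the paper's gauge change is designed to remove; you sidestep it instead by never asking Weyl's theorem to absorb $V$. The paper's approach is more structural (recognize a flux-tube Hamiltonian, import known spectral theory), while yours trades that for a direct construction that requires no external reference.
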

\begin{proof}
Since $K^{\chi,g}_{\lambda,\rho}(k) = K^{\chi,g}_{\lambda,0}(k) + K^{\chi,g}_{0,\rho}(k)$, the independence of the essential spectrum on the parameter~$\rho$ follows from the fact that $K^{\chi,g}_{0,\rho}(k)$ is a finite rank perturbation. It therefore suffices to consider~(\ref{HL gauge}), which is the magnetic operator with a simple flux tube considered in Proposition~\ref{No bound state}. The claim follows from~Section~2 of~\cite{DeNittisSchuba}: The vector potential given here is gauge equivalent to the Aharonov-Bohm gauge $\tilde A$ and therefore $K^{\chi,g}_{\lambda,0}(k)$ is unitarily equivalent to the operator $\tilde K^{\chi,g}_{\lambda,0}(k)$ given by magnetic translations with $\tilde A$ replacing~$A$. If we now denote $K_{\lambda,0}(k)$ the operator corresponding to $A=0$ (it does not depend on $(\chi,g)$ indeed), then $\tilde K^{\chi,g}_{\lambda,0}(k) - K_{\lambda,0}(k)$ is compact by Proposition~3 in~\cite{DeNittisSchuba}. Hence the essential spectrum of the full $K^{\chi,g}_{\lambda,\rho}(k)$ equals that of $K_{\lambda,0}(k)$, which is a sum of two independent discrete Laplacians, concluding the proof.
\end{proof}
We conclude with a proof that $K^{\chi,g}_{\lambda,\rho}(k)$ has bound states for a range of parameters $\lambda,\rho$. If $\lambda>0$ then in particular $K^{\chi,g}_{\lambda,\rho}(k)$ has a bound state for all $\rho>0$, in a neighbourhood of the two lines $(k_x,\frac{\pi}{4})$ and $(\frac{\pi}{4},k_y)$.
\begin{thm}
Let $\left\vert 1+\frac{\rho}{\lambda}\right\vert>1$ and let $k_x\neq\frac{\pi}{4}\:(\mathrm{mod}\frac{\pi}{2})$. Then there is a neighbourhood $\caU$ of $(k_x,\frac{\pi}{4})$ such that if $k\in\caU$, then $K^{\chi,g}_{\lambda,\rho}(k)$ has four possibly degenerate eigenvalues; The corresponding eigenstates are exponentially localized.
\end{thm}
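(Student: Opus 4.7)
The plan is to exploit an exact direct-sum decoupling of $K^{\chi,g}_{\lambda,\rho}(k)$ along the line $k_y=\pi/4$, reducing the problem to a one-dimensional single-impurity bound state calculation, and then extend the result to a neighbourhood by analytic perturbation theory. First I would set $k=(k_x,\pi/4)$ so that $\cos(2k_y)=0$; then the $d_y$-Laplacian in $K^{\chi,g}_{\lambda,0}(k)$ drops out, and the identities $\sigma^z+P=2|1\rangle\langle 1|$ and $P-\sigma^z=2|-1\rangle\langle-1|$ on $\ell^2(2\bbZ+1)$, together with the diagonal form of $\Theta$ in $d_y$, give
\begin{equation*}
K^{\chi,g}_{\lambda,\rho}\bigl((k_x,\pi/4)\bigr) \;=\; \bigoplus_{d_y\in 2\bbZ+1} H_{d_y},
\end{equation*}
where each fibre $H_{d_y}$ acts on $\ell^2(2\bbZ+1)$ in the $d_x$ variable only.

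Writing $\alpha=2\cos(2k_x)\neq 0$ and collecting the contributions of~(\ref{HL gauge}) and of $K^{\chi,g}_{0,\rho}(k)$ fibre by fibre, I expect to find that for $|d_y|>1$ the operator $H_{d_y}$ is the uniform nearest-neighbour Laplacian $\lambda\alpha\Delta$ (for $d_y>1$) or a gauge-equivalent twist of it (for $d_y<-1$, where the flux-tube phase $\chi(g)$ sits on the single bond between $\pm 1$ and can be absorbed by a diagonal unitary on the right half-line), so that its spectrum is the essential-spectrum band $[-2|\lambda\alpha|,2|\lambda\alpha|]$. For $d_y=\pm 1$, the flux-tube and $\rho$ contributions combine so that only the single hopping between sites $\pm 1$ is modified, its amplitude going from $\lambda\alpha$ to $\alpha(\lambda+\rho)$ (up to a removable $\chi(g)$ phase when $d_y=-1$). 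Hence $H_{\pm 1}$ is unitarily equivalent to a 1D nearest-neighbour Laplacian with a single strengthened bond, and the two fibres are in fact unitarily equivalent to each other.

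The key computation is then a textbook single-impurity bound state problem: the ansatz $\psi(n)\propto z^{(|n|-1)/2}$ with $|z|<1$, matched across the impurity bond, yields $z=\pm\lambda/(\lambda+\rho)$ and eigenvalues
\begin{equation*}
E_\pm \;=\; \pm\,\frac{|\alpha|\bigl(\lambda^2+(\lambda+\rho)^2\bigr)}{|\lambda+\rho|}.
\end{equation*}
The condition $|z|<1$ is exactly $|1+\rho/\lambda|>1$, and under this hypothesis the algebraic identity $|E_\pm|-2|\lambda\alpha|=|\alpha|(|\lambda+\rho|-|\lambda|)^2/|\lambda+\rho|>0$ places the eigenvalues strictly outside the essential spectrum. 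Since $H_1$ and $H_{-1}$ share this pair, the direct sum contributes four eigenvalues counted with multiplicity (doubly degenerate at this special point), with eigenstates that decay geometrically in $d_x$ and are supported at $d_y=\pm 1$.

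For the final step I would invoke Kato's analytic perturbation theory. The fibre map $k\mapsto K^{\chi,g}_{\lambda,\rho}(k)$ is norm-analytic on $\bbT^2$, the essential spectrum $[-R_\lambda(k),R_\lambda(k)]$ is continuous in $k$, and the four eigenvalues identified above lie at strictly positive distance from it at $(k_x,\pi/4)$. Applying Kato's theorem to the Riesz projector on this isolated cluster preserves the total multiplicity on any open set in which the spectral gap stays open; by continuity this contains an open neighbourhood $\caU$ of $(k_x,\pi/4)$, in which the fourfold degeneracy may split into up to four distinct eigenvalues. The exponential localization of the corresponding eigenstates in $\caU$ then follows from a standard Combes--Thomas estimate, with decay rate controlled by the distance to the essential spectrum. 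The main obstacle in executing the plan is the careful bookkeeping of the flux-tube phases when assembling each $H_{d_y}$ at the decoupled point; once that is done, everything reduces to a classical 1D impurity calculation and a standard stability-of-isolated-eigenvalues argument.
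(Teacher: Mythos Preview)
Your proposal is correct and follows essentially the same route as the paper: set $k_y=\pi/4$ to make the operator diagonal in the $d_y$ factor, reduce each fibre to a one-dimensional Laplacian with a single modified bond at $\{-1,1\}$ (the paper records the same bond amplitudes $b_j$ you describe), solve the impurity problem to find bound states precisely when $|1+\rho/\lambda|>1$, and then invoke perturbation theory for isolated eigenvalues to propagate the result to a neighbourhood. The only cosmetic differences are that the paper phrases the one-dimensional step via a transfer-matrix computation rather than your exponential ansatz, and that you make the exponential localization in the neighbourhood explicit via a Combes--Thomas argument where the paper leaves this implicit.
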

\begin{proof}
First of all,
\begin{align*}
   % \label{eqn:kx_ham}
K^{\chi,g}_{\lambda,\rho}\left(k_x,\frac{\pi}{4}\right) = \lambda \cos(2k_x) \Big\{ & \left(\chi(g) \sigma^+ + \overline{\chi(g)}\sigma^-\right)\otimes \left(2\Theta + \frac{\rho}{\lambda}(P-\sigma^z)\right) \\
& + 2\Delta\otimes\mathbb{I} + \sigma^x\otimes\left(-2\Theta+ \frac{\rho}{\lambda}(P+\sigma^z)\right) \Big\}.
\end{align*}
As should be expected, this operator is diagonal in the second tensor factor. Therefore, it suffices to search for eigenstates of the form $\psi\otimes \vert j\rangle$ for any $j\in 2\bbZ+1$. Then $K^{\chi,g}_{\lambda,\rho}(k_x,\frac{\pi}{4}) \psi\otimes \vert j\rangle =  2\lambda \cos(2k_x) (\Delta_j^{\chi,g}\psi) \otimes  \ket{j}$, where the matrix elements of $\Delta_j^{\chi,g}$ are
\begin{equation*}
(\Delta_j^{\chi,g})_{m,n} = \begin{cases}
1 & \text{if }\vert m-n\vert = 2 \text{ and }(\vert m\vert,\vert n\vert )\neq (1,1) \\
b_j & \text{if }(m,n) = (-1,1) \\
\overline{b_j} & \text{if }(m,n) = (1,-1) \\
0 & \text{otherwise}
\end{cases}
\end{equation*}
and the values of $b_j$ are given by
\begin{equation*}
b_j = \begin{cases}
\chi(g) & \text{if }j<-1 \\
\chi(g)(1+ \frac{\rho}{\lambda}) & \text{if }j=-1 \\
(1+ \frac{\rho}{\lambda}) & \text{if }j=1 \\
1 & \text{otherwise}
\end{cases}
\end{equation*}
This is amenable to a transfer matrix analysis. If $\psi$ is an eigenvector of $\Delta_j^{\chi,g}$ for the eigenvalue $E$, then the vector $\phi$ given by $\phi_n = (\psi_{2n+1},\psi_{2n-1})^T$ solves
\begin{equation*}
T\phi_n = \phi_{n+1},\qquad 
T = \begin{pmatrix}
E & -1\\
1 & 0
\end{pmatrix}
\end{equation*}
for all $n\in\bbN\setminus\{-1,0\}$, with boundary condition given by
\begin{equation*}
\phi_0 = 
\begin{pmatrix}
E/b_j & -1/b_j\\
1 & 0
\end{pmatrix}\phi_{-1},\qquad
\phi_1 = 
\begin{pmatrix}
E & -\overline{b_j}\\
1 & 0
\end{pmatrix}\phi_0,
\end{equation*}
namely
\begin{equation}\label{transfer BC}
\phi_1 = B_j\phi_{-1},\qquad 
B_j = \begin{pmatrix}
(E^2-\vert b_j\vert^2)/b_j & -E/b_j\\
E/b_j & -1/b_j
\end{pmatrix}.
\end{equation}
The two eigenvalues of the transfer matrix are equal to $\tau_\pm=\frac{1}{2}(E\pm\sqrt{E^2-4})$ with eigenvectors $v_\pm= (
\tau_\pm,1)^{\mathrm T}$. The eigenvector gives rise to a solution of the eigenvalue equation that decays exponentially at $+\infty$, respectively $-\infty$, if the corresponding eigenvalue has modulus $<1$, respectively $>1$. Both necessarily happen simultaneously since $\tau_+\tau_- = 1$. If $\vert E\vert>2$, then $\tau_\pm$ are real with $\tau_+>1$. Otherwise $\tau_- = \overline{\tau_+}$ and hence $\vert \tau_\pm\vert = 1$. Therefore, a normalizable solution of the eigenvalue equation may only exist if $\vert E\vert>2$. In that case, we set $\phi_1 = \alpha v_-$ for some $\alpha\neq 0$ and $\phi_{-1} = v_+$ and the boundary condition~(\ref{transfer BC}) reduces after elimination of $\alpha$ to an equation for the energy
\begin{equation*}
(E^2 - \vert b_j\vert^2)\tau_+^2 =2 E\tau_+ - 1
\end{equation*}
or equivalently 
\begin{equation*}
\vert b_j\vert^2 = \tau_+^2.
\end{equation*}
Since $\vert \tau_+\vert>1$, we conclude that a solution exists if and only if $\vert b_j\vert>1$, which in turn holds if and only if $j=\pm 1$ and $\left\vert 1+ \frac{\rho}{\lambda}\right\vert >1$. The eigenvalues are given by
\begin{equation*}
E = \tau_+ +\tau_- = \pm\left(B + B^{-1}\right),\qquad B = \left\vert 1+ \frac{\rho}{\lambda}\right\vert,
\end{equation*}
each of them being doubly degenerate since $\vert b_1\vert = \vert b_{-1}\vert$.

If $\left\vert 1+ \frac{\rho}{\lambda}\right\vert >1$, the rest of the spectrum of $\Delta_j^{\chi,g}$ lies in $[-2,2]$ and corresponds to oscillatory solutions. In particular, the spectrum of $K^{\chi,g}_{\lambda,\rho}(k_x,\frac{\pi}{4})$ is characterized by a gap $g(k_x)$ given by
\begin{equation*}
g(k_x) = 2\lambda\left\vert \cos(2k_x) \right\vert \left(B + B^{-1} -2\right),
\end{equation*}
and $g(k_x)>0$ for all $k_x\neq\frac{\pi}{4}\:(\mathrm{mod}\frac{\pi}{2})$. Since the operator-valued function $k\mapsto K^{(\chi,g)}_{\lambda,\rho}(k)\in\caL(\ell^2((2\bbZ+1)^2))$ is smooth, the existence of four branches of eigenstates in a neighbourhood of $(k_x,\frac{\pi}{4})$ away from the degenerate points $k_x\neq\frac{\pi}{4}\:(\mathrm{mod}\frac{\pi}{2})$ is ensured by spectral perturbation theory.
\end{proof}

A similar, and simpler, analysis can be carried out around $(\frac{\pi}{4},k_y)$. In that case, the Hamiltonian reduces to
\begin{equation*}
K^{\chi,g}_{\lambda,\rho}(\frac{\pi}{4},k_y) =2 \lambda\cos(2k_y) \left( \idtyty \otimes \Delta + \frac{\rho}{\lambda} P \otimes \sigma^{x}\right)
\end{equation*}
which is now diagonal with respect to the first factor. It is a simple Laplacian (and therefore has no bound state) on the range of $\idtyty - P$, while 
\begin{equation*}
K^{\chi,g}_{\lambda,\rho}(\frac{\pi}{4},k_y)P\otimes \idtyty = 2 \lambda\cos(2k_y) P\otimes \left(\Delta + \frac{\rho}{\lambda} \sigma^{x}\right)
\end{equation*}
which is of the same form as above, with $b_{\pm1}\to 1+\frac{\rho}{\lambda}$. We conclude again that two bound states exist for any $k_y$ provided $\left \vert 1+\frac{\rho}{\lambda}\right\vert>1$ and they are stable away from the degenerate points.

\begin{figure}
    \centering
    \includegraphics[width=0.6\textwidth]{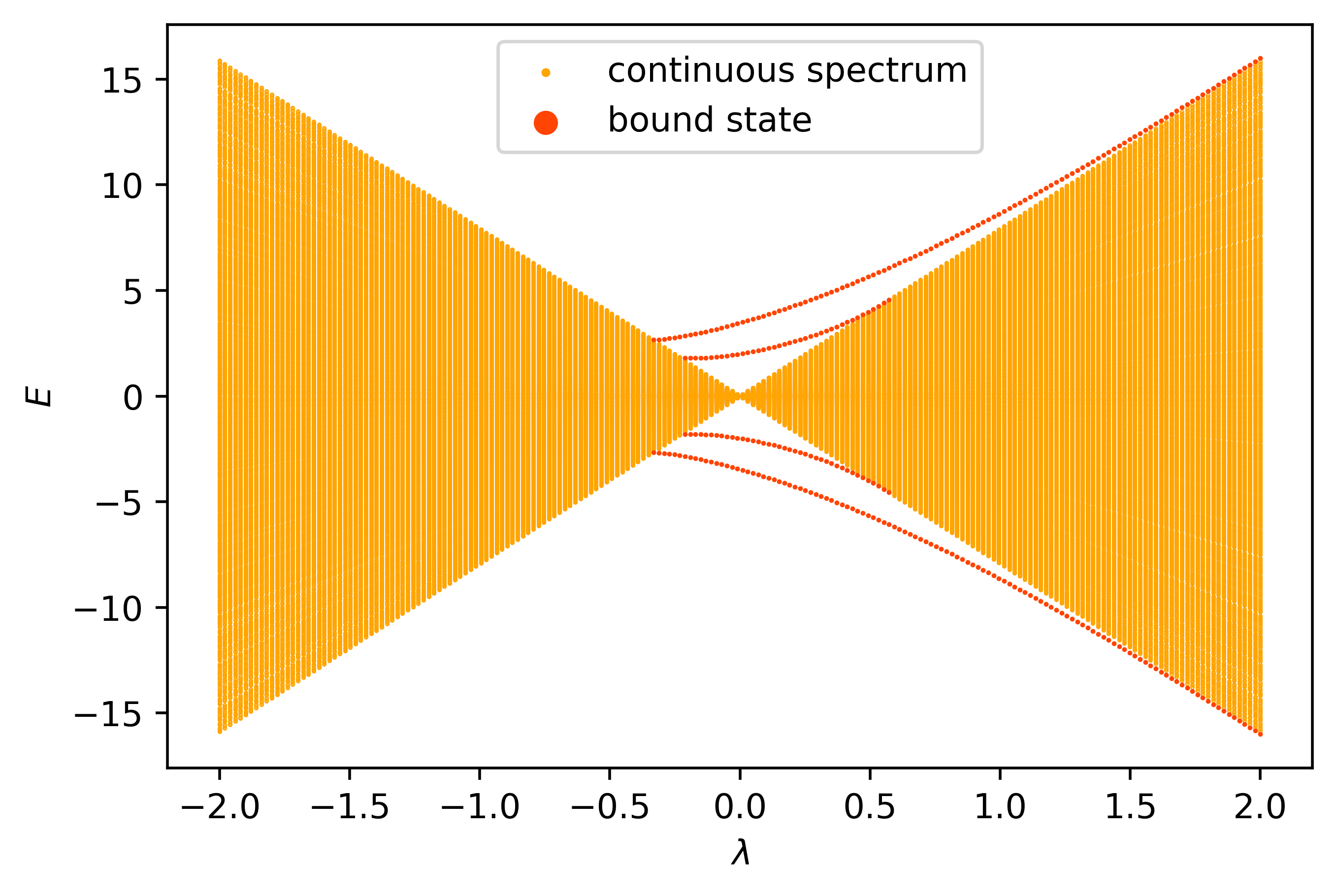}
    \caption{The numerically computed spectrum of $K^{\chi,g}_{\lambda,1}\left(0,0\right)$ as a function of $\lambda$, namely well away from the points where transfer matrix methods yield analytical solutions. The two degenerate branches observed whenever $k_x=\frac{\pi}{4}$ or $k_y=\frac{\pi}{4}$ are now split, clearly showing the four branches of eigenvalues in a wide range of parameters.}
    \label{fig:Kx_Evslam}
\end{figure}

%%%%%%%%%%%%%%%%%%%%%%%%%%%%%%%%%%%%
\subsection*{Acknowledgements}
Based upon work supported by the National Science Foundation under grants DMS--1813149 and DMS--21083901 (B.N. \& S.V.). S.B. was supported by NSERC of Canada. This work was initiated while B.N. and S.B. were attending the `Spectral Methods in Mathematical Physics' program held at the Institut Mittag-Leffler, April 2019. BN gratefully acknowledges kind hospitality at the Technical University Munich during the final stages of this work and the Alexander von Humboldt Foundation for support provided through a Carl Friedrich von Siemens Research Award.

\bibliographystyle{unsrt}
\bibliography{RefsDTCM.bib,qss}

\end{document}